\documentclass[letterpaper,journal]{IEEEtran}
\usepackage{amssymb}
\usepackage{amsmath,amsfonts}
\usepackage{amsthm}
\usepackage{algorithmic}
\usepackage{color}
\usepackage{algorithm}
\usepackage{multirow}
\usepackage{mathrsfs}
\usepackage{array}
\usepackage{textcomp}
\usepackage{stfloats}
\usepackage{float}
\usepackage{booktabs}
\usepackage{verbatim}
\usepackage{url}
\usepackage{graphicx}
\usepackage{subfigure}
\usepackage{cite}

\begin{document}
\title{ISAC for AI: A Trade-off Framework Across Data Acquisition and Transfer in Federated Learning}

\author{Lai Jiang,~\IEEEmembership{Student Member,~IEEE,} Kaitao Meng,~\IEEEmembership{Member,~IEEE,} Murat Temiz,~\IEEEmembership{Member,~IEEE,} and Christos Masouros,~\IEEEmembership{Fellow,~IEEE}
        
\thanks{Lai Jiang, Murat Temiz and Christos Masouros are with the Department of Electronic and Electrical Engineering, University College London, London, UK (emails: \{lai.jiang.22, m.temiz, c.masouros\}$@$ucl.ac.uk).\\ 

Kaitao Meng is with the School of
Electrical and Electronic Engineering, The University of Manchester, Manchester, UK (email: kaitao.meng$@$manchester.ac.uk).}}

\maketitle

\newtheorem{lemma}{\bf Lemma}
\newtheorem{remark}{\bf Remark}
\newtheorem{Pro}{\bf Proposition}
\newtheorem{theorem}{\bf Theorem}
\newtheorem{Assum}{\bf Assumption}
\newtheorem{Cor}{\bf Corollary}

\begin{abstract}
In this paper, we propose a resource allocation framework for federated learning (FL) in integrated sensing and communication (ISAC) systems, where we consider not only the reliability of model transfer through communication, but also the quality of data acquisition through sensing in the first place. Unlike existing works that assume training data is pre-collected or only impose a fixed sensing signal-to-noise ratio (SNR) threshold to reflect data quality, we explicitly characterize the relationship between sensing data quality (measured by sensing SNR), dataset size, and the upload reliability in FL training, and exploit this relationship to allocate resources between sensing and communication under a shared energy budget.
This is non-trivial due to the intricate coupling among sensing data quality, transmission reliability, and communication resource allocation; nevertheless, it enables a principled joint optimization framework that directly enhances learning performance.
Specifically, we derive a closed-form convergence upper bound that quantifies the joint impact of these factors on the FL optimality gap.
Utilizing this upper bound, the original intractable optimization problem can be reformulated into a tractable resource allocation problem that jointly optimizes the sensing transmit power, number of sensing snapshots, and communication transmit power at each device subject to individual energy budget constraints. 
To solve the reformulated problem, we propose a two-layer optimization algorithm with linear complexity, where the outer layer employs golden section search and the inner layer solves per-device subproblems with closed-form solutions.
Numerical simulations on multiple datasets, including MNIST dataset and the real-world DeepSense 6G radar dataset, demonstrate that the proposed joint optimization strategy consistently outperforms baseline schemes that separately prioritize sensing or communication resources.

\end{abstract}

\begin{IEEEkeywords}
Integrated sensing and communication, federated learning, resource allocation.
\end{IEEEkeywords}

\section{Introduction}
The upcoming sixth generation (6G) wireless networks are envisioned to support a wide range of intelligent applications, including autonomous vehicles, extended reality (XR), and smart infrastructure. 
Recently, the International Telecommunication Union (ITU-R) has identified integrated sensing and communication (ISAC) and artificial intelligence (AI) as two of the three new usage scenarios for IMT-2030 \cite{singh2025towards}.
These emerging scenarios impose stringent requirements on future networks, including ultra-high data rates, low latency, and strong AI capabilities deployed at the network edge \cite{yang2020artificial, letaief2021edge}.
To meet these requirements, deep learning models need to be trained on large volumes of data collected from distributed devices, which serve as both sensors and communication endpoints in future networks. The conventional approach of centralizing all data at a server for model training faces fundamental limitations, as transmitting raw data from a massive number of devices incurs prohibitive communication overhead and raises serious privacy concerns \cite{sun2019application}. 
These challenges have motivated a new learning paradigm that can leverage the data distributed across devices without requiring its centralization, motivating the adoption of federated learning (FL) as a key enabler of distributed intelligence in 6G networks \cite{bonawitz2019towards}. 

\IEEEpubidadjcol
FL achieves this by enabling a large number of devices to collaboratively train a shared model using their local data, without exchanging raw data with a central server, thereby also preserving data privacy \cite{mcmahan2017communication}. 
Despite the promise, deploying FL over wireless networks introduces unique challenges that are not present in conventional distributed computing settings. In particular, the unreliability of wireless channels leads to stochastic model upload failures, which cause devices to be dropped from the aggregation process and introduce bias into the global model update \cite{ang2020robust, wei2022federated}. Moreover, the heterogeneity of devices in terms of channel conditions, computational capability, and energy budgets makes it difficult to ensure consistent participation of all distributed devices across model upload rounds.
Furthermore, in practical wireless systems, the local data at each device is often generated from heterogeneous environments, resulting in non-independently and identically distributed (Non-IID) data across devices. Such data heterogeneity causes the local gradients to diverge from the global gradient, further slowing convergence and making the aggregation of diverse local updates particularly critical \cite{zhao2021federated}.
In response to these challenges, considerable research effort has been devoted to optimizing communication resources for improving FL performance and reducing communication overhead \cite{chen2020joint, yang2020energy, wan2021convergence, vu2020cell, shi2020joint, huang2022wireless}. 
For instance, \cite{chen2020joint} derives a convergence upper bound for FL over wireless networks and optimizes transmit power and bandwidth allocation to accelerate convergence. Building on this, \cite{huang2022wireless} further proposes joint beamforming and device scheduling to improve FL training performance in MIMO networks.
While valuable insights into the communication aspects of wireless FL have been provided, most of these works treat local training data as fixed and pre-collected. As a result, the role of data acquisition, as well as the impact of data quality and dataset size on FL training performance remains largely overlooked.

Nevertheless, the data that the distributed agents use to learn needs to be acquired first, making  the quality of \textit{{data acquisition}} in the first place a key performance differentiator. In practical systems, training data can be acquired through active sensing, and the resulting data quality is directly governed by the allocated sensing resources. Furthermore, in ISAC deployments, sensing and communication compete for the same energy and spectrum resources, making the quality of sensed data and the reliability of model uploads inherently coupled, which cannot be addressed by optimizing communication resources alone.
In this context, ISAC emerges as a promising technique to tackle this challenge, as it enables devices to simultaneously perform sensing and communication over shared hardware and spectrum \cite{liu2020joint, liu2022integrated, meng2024cooperative1, meng2024cooperative2}.
The resource allocation problem in ISAC systems has been extensively studied, with existing works focusing on optimizing the dual-function waveform design, power allocation, and beamforming to balance sensing and communication performance \cite{liu2021cramer, liuxiang2020joint, Hua10086626}. Building on these foundations, the integration of ISAC with FL opens up new opportunities for data-driven learning at the network edge \cite{liu2022vertical, liu2024multi, zhang2024federated}, and results in entirely new synergies and trade-offs across sensing and communications, with learning-oriented objectives.

\IEEEpubidadjcol
By exploiting the dual functionality of ISAC, devices can actively collect sensing data from the surrounding environment while participating in communication tasks, which makes ISAC a natural fit for FL frameworks where devices are required to both gather training data and upload model updates. This integration gives rise to a new paradigm, where devices sense the environment to acquire local training samples and collaboratively train a model under the coordination of a central server. In such a system, the sensing resource allocation directly determines the quality and quantity of locally collected training data, which in turn affects the convergence of the FL algorithm. This introduces an underexplored dimension to the resource allocation problem in wireless FL \cite{qin2021federated}. 
Unlike conventional studies, where local datasets are usually assumed readily available, the considered ISAC-enabled FL system makes data acquisition part of the resource allocation problem. As a result, sensing, communication, and learning become tightly coupled through the shared energy budget and their joint impact on FL convergence.
Taking a step further than conventional wireless FL works that focus solely on communication resources, a number of works have incorporated sensing constraints into the FL resource allocation framework \cite{hu2025differentially, du2025flisc, liu2022toward}.
Specifically, \cite{hu2025differentially} and \cite{du2025flisc} jointly optimize sensing and communication resources in FL systems and explicitly take into account the fact that the number of locally collected training samples is determined by the sensing resource allocation. However, both works characterize sensing quality by requiring the sensing SNR at each device to exceed a predefined threshold, which is treated as a non-optimized system parameter rather than a variable that quantitatively affects the learning performance. While such a constraint prevents severe sensing degradation, it does not establish the analytical relationship between the sensing SNR and the FL convergence behavior. 
More importantly, the sensing SNR threshold only ensures feasibility of sensing quality, but cannot reflect the effect of sensing performance on the bottom-line learning performance. As a consequence, the resulting resource allocation may still be far from optimal in terms of learning performance even when the sensing constraint is satisfied.
In \cite{liu2022toward}, the impact of sensing transmit power on FL training performance is analyzed theoretically for a human motion recognition task, and the results show that satisfactory learning performance can be achieved once the sensing transmit power exceeds a certain threshold. However, this conclusion is derived under a task-specific signal model and may not generalize to other sensing tasks or data modalities. As a result, the trade-off between sensing and communication cannot be fully characterized under such formulations, and the allocated resources may still be suboptimal from the perspective of learning performance. 
To address this gap, a principled analytical framework that explicitly models the relationship between sensing SNR, data quality, and FL convergence is needed for general learning tasks, which forms the basis of the present work.

\IEEEpubidadjcol
In this paper, we propose an ISAC-enabled federated learning framework in which devices simultaneously sense the environment to collect local training data and upload model updates to the central server. The main contributions of this work are summarized as follows:
\begin{itemize}
    \item We develop a theoretical connection between sensing data quality, dataset size, model upload quality and FL convergence by deriving a closed-form upper bound on the optimality gap that explicitly captures the joint effect of sensing resource allocation, dataset size, and upload reliability on learning performance. This provides a novel analytical framework that links sensing resource allocation directly to FL convergence in an ISAC system. In particular, we model the gradient noise variance as inversely proportional to both the sensing SNR and the local dataset size, and empirically validate this relationship on the MNIST dataset.
    \item Based on the convergence analysis, we reformulate the original intractable FL performance optimization problem into a tractable resource allocation problem that jointly optimizes the sensing transmit power, number of sensing snapshots, and communication transmit power at each device under individual energy budget constraints.
    \item We propose a two-layer optimization algorithm to solve the reformulated problem, where the outer layer applies the golden section search to find the optimal minimum sensing SNR, and the inner layer solves per-device subproblems with closed-form solutions. The overall complexity of the proposed algorithm scales linearly with the number of devices and the maximum number of snapshots, confirming its computational efficiency.
\end{itemize}
Numerical simulations across multiple datasets, including the MNIST dataset and the real-world DeepSense 6G radar dataset, confirm that the proposed framework achieves consistent performance gains over baseline schemes under both IID and Non-IID data distributions, validating the effectiveness of joint sensing and communication optimization.
    
\section{System Model}
\IEEEpubidadjcol
In the considered FL framework, one BS coordinates a set $\mathcal{K}$ consisting of $K$ devices to jointly train a sensing-related deep learning model, using their locally acquired datasets $\mathcal{D}_k, k=1,2,...,K$, where the size of each dataset is $D_k$. The model is trained using FedAvg\cite{mcmahan2017communication}. We define the number of global iterations to achieve a pre-defined accuracy $\varepsilon$ as $E_G(\varepsilon)$.
In each global round, the central server broadcasts the current global model to all devices, each of which then performs $E_L$  local SGD updates \cite{amari1993backpropagation} using its local dataset before uploading the updated model parameters back to the server for aggregation.
The goal of the training framework is to find a global model parameter $\mathbf{w}$ that generalizes well across the heterogeneous local datasets of all participating devices, which is achieved by minimizing the following global loss function \cite{li2020federatedsurvey}
\begin{equation}
    F(\mathbf{w})=\sum^{K}_{k=1}\rho_kF_k(\mathbf{w}),
\end{equation}
where $F_k(\mathbf{w})$ is the local loss function of device $k$ and $\rho_k$ is the weighting factor, where $\sum^K_{k=1}\rho_k=1$. The considered scenario is shown in Fig. \ref{fig_system}, and the underlying sensing-communication trade-off is conceptually illustrated in Fig. \ref{fig_concept}.

\begin{table}[!t]
\caption{Notations and Definitions\label{tab:notations}}
\centering
\scalebox{0.95}{
\begin{tabular}{cc}
\toprule
Notation & Definition\\
\midrule
$K,\mathcal{K}$ & Number and set of devices\\

$E_G$, $E_L$ & Number of global iterations and local iterations\\

$\mathcal{D}_k,D_k$ & Local dataset and its size at device $k$\\

$F(\mathbf{w}),F_k(\mathbf{w})$ & Global and local loss function\\

$\xi$ & random training sample \\

$\mathbf{w}^k_t$ & The local model of device $k$ at round $t$\\

$T^s$ & The sensing duration\\

$T^c_k$ & The time consumption for uploading at device $k$ \\

$T^{train}_k$ & The time consumption for local training at device $k$\\

$T_{th}$ & Delay constraint for each upload round \\

$\tau_k$ & Length of one snapshot at device $k$\\

$L^s_k$ & Number of snapshots at device $k$\\

$\gamma^s_k$, $\gamma^c_k$ & Sensing SNR and uploading SNR at device $k$\\

$p^s_k,p^c_k$ & Sensing and communication transmit power at device $k$\\

$d_k$ & Distance from device $k$ to the target\\

$d_{k,s}$ & Distance from device $k$ to the server\\

$G_k$ & Antenna gain of device $k$\\

$\psi_k, \Omega^s_k$ & Small-scale fading factor and its mean\\

$E^{train}_k$ & The energy consumption for training at device $k$ \\

$E_k$ & Energy budget at device $k$ \\

$\kappa_k$ & Energy consumption per sample at device $k$ \\

$h_k, \Omega^c_k$ & channel gain factor and its mean\\


$\alpha_p$ & Large-scale pathloss factor \\

$z$ & Size of model parameters in bits \\

$R_k$ & Data rate at device $k$ \\

$P^{out}_k$ & Uploading outage probability at device $k$ \\

$q_k$ & The probability of successful uploading at device $k$ \\

$\rho^{(t)}_k$ & Aggregation weighting factor of device $k$ \\

$S_t$ & The active set of devices \\

$\zeta^2$, $\sigma^2_0$ & Upper-bound constant and gradient noise constant \\
\bottomrule
\end{tabular}}
\end{table}

\subsection{Environmental Sensing for Data Collection}
During the sensing stage, each device $k$ transmits a signal with power $p^s_k$ to sense the environment and obtain sensing information within a range of $d_k\in[d_\text{min}, d_\text{max}]$. The sensing transmit power should be larger than a threshold to ensure the quality of sensed data, and less than the maximal transmit power
\begin{equation}
    p^s_\text{min} \leq p^s_k \leq p^s_\text{max}, k=1,2,...,K.
\end{equation}

We assume the length of one snapshot is $\tau_k$ at device $k$, and $L^s_k$ is the number of snapshots to obtain a single sensing frame. Then the required time to obtain a sensing frame is $T_{k,0}=L^s_k\tau_k,\ k=1,2,...,K$.
During the sensing stage $T^s$, the total number of sample collection attempts is calculated as $D_k=T^s/T_{k,0}=T^s/(L^s_k\tau_k)$.
To maintain the temporal validity of acquired samples, the time consumption for sensing will not exceed a given duration, which is expressed by
\begin{equation}
    \tau_k \leq T^s \leq T^s_{th}, k=1,2,...,K.
\end{equation}

Coherent accumulation is adopted in the considered system, where multiple snapshots are integrated to improve the sensing signal-to-noise ratio (SNR). In this case, the effective SNR can be strengthened by increasing both sensing transmit power $p^s_k$ and the number of snapshots $L^s_k$ at the cost of increased energy consumption and reduced amount of available training samples. The sensing SNR at device $k$ is given by
\begin{equation}
\label{eq:sensing_snr}
    \gamma^s_{k}=\frac{p^s_kG_k L^s_k\psi_{k}}{\sigma^2_sd^{\alpha_p}_{k}},
\end{equation}
where $G_k$ is the antenna gain, $\psi_{k}$ is a random variable quantifying the small-scale fading, which follows an exponential distribution, i.e., $\psi_{k} \sim \text{Exp}(\frac{1}{\Omega^s_k})$ with $\Omega^s_k$ being the mean. The power of Gaussian noise is denoted by $\sigma^2_s$, $d_k\in[d_\text{min}, d_\text{max}]$ is the distance and $\alpha_p$ models the effect of the round-trip pathloss. The sensing SNR expression in \eqref{eq:sensing_snr} reveals that improving data quality and increasing dataset size impose competing demands on energy budget, as allocating more power or snapshots to sensing enhances data quality but leaves less energy for communication. These coupled dependencies between sensing quality, dataset size, and communication reliability highlight the need for a principled joint optimization framework.

\begin{figure*}[htbp]
\setlength{\abovecaptionskip}{0.cm}
	\centering
	\begin{minipage}{0.49\linewidth}
		\centering
		\includegraphics[scale=0.07]{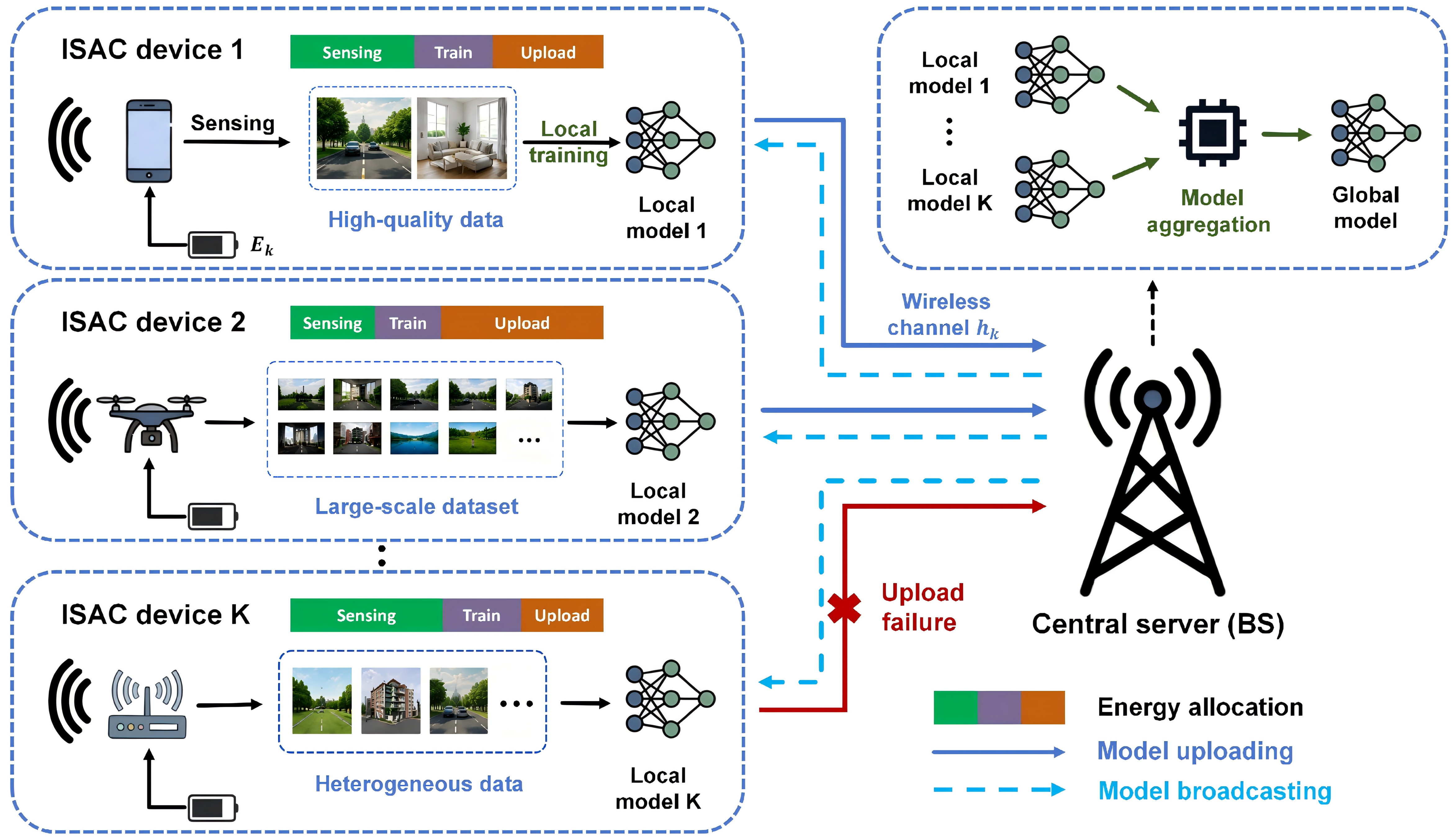}
		\caption{The illustration of ISAC-enabled federated learning framework.}
		\label{fig_system}
	\end{minipage}
	\begin{minipage}{0.49\linewidth}
		\centering
		\includegraphics[scale=0.07]{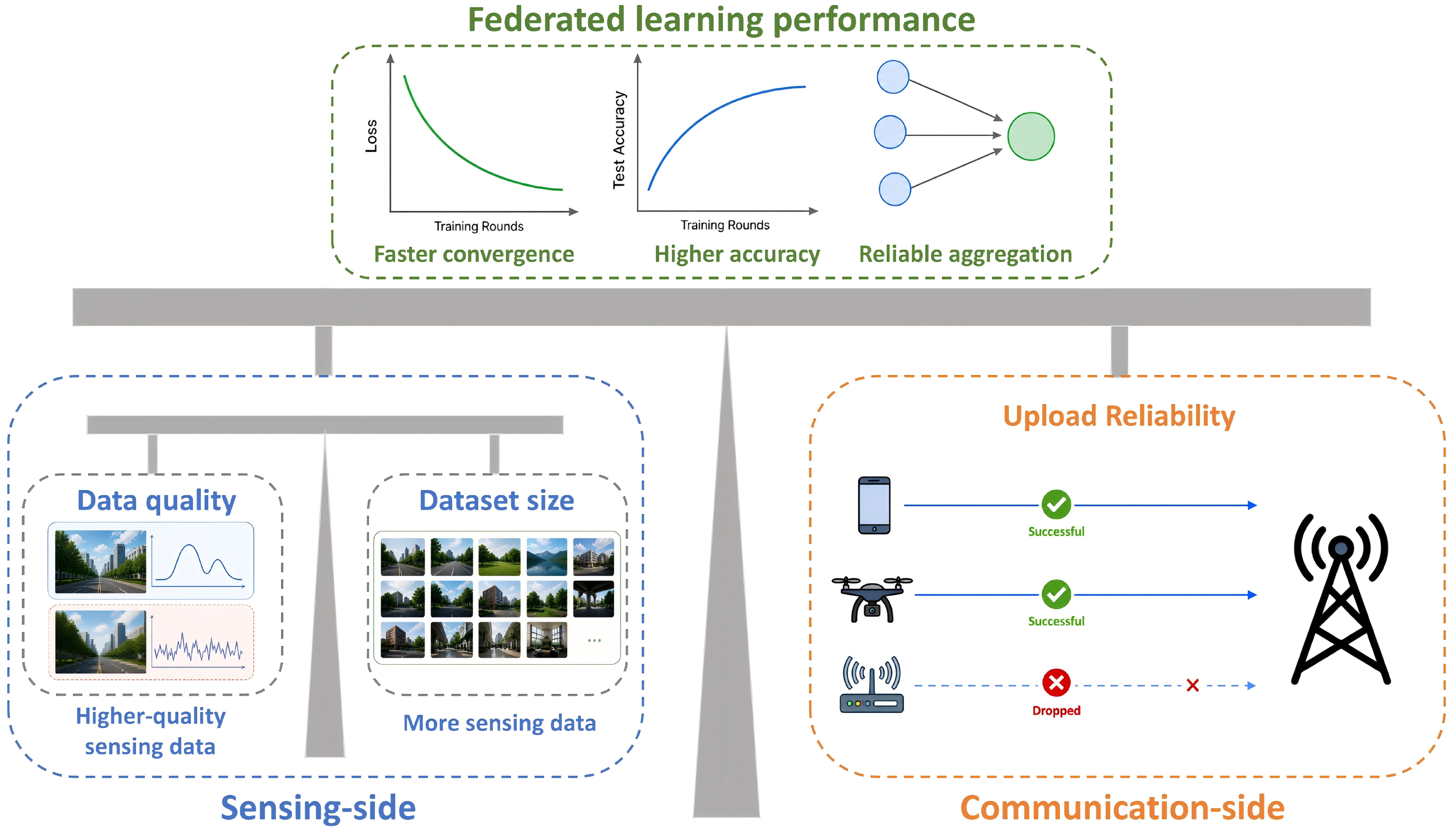}
		\caption{Sensing-communication trade-off in ISAC-enabled federated learning.}
		\label{fig_concept}
	\end{minipage}
    \vspace{-12pt}
\end{figure*}

\subsection{Models broadcast and Local Training}
\IEEEpubidadjcol
At the start of each global epoch $t \in \mathcal{I}_g=\{nE_L\mid n=1,2,...\}$, the central server broadcasts the global model parameters $\mathbf{w}_t$ to each participating device. We assume the latency for the downlink transmission can be ignored as the server has greater transmit power than devices to support high-rate transmission \cite{liu2018toward}. 
After receiving the global model parameters, the local device $k$ lets $\mathbf{w}^k_t=\mathbf{w}_t$, and performs $E_L$ local epochs to update the parameters as
\begin{equation}
    \mathbf{w}_{t+i+1}^k=\mathbf{w}_{t+i}^k -\eta_{t+i}\nabla F_k(\mathbf{w}_{t+i}^k;\xi),
\end{equation}
where $i=0,...,E_L-1$, $\eta_{t+i}$ denotes the learning rate, and $\nabla F_k(\mathbf{w}_{t+i}^k;\xi)$ is the stochastic gradient on sampled data $\xi$.


\subsection{Uploading and Model Aggregation}
\IEEEpubidadjcol
After model training, the devices upload the model parameters to the central server for model aggregation. The size of parameters to be transmitted is denoted by $z$ in bits. We assume block fading channels and the device $k$ transmits with power $p^c_k$, the instantaneous uplink SNR is given by
\begin{equation}
    \gamma^c_k=\frac{p^c_kG_kh_k}{\sigma^2_c d_{k,s}^{\alpha_p/2}},
\end{equation}
where $d_{k,s}$ is the distance from device $k$ to the server, $h_k$ is a random variable denoting the channel gain, which is assumed to follow $h_k\sim \text{Exp}(\frac{1}{\Omega^c_k})$ with $\Omega^c_k$ being the mean. The uploading rate is given by
\begin{equation}
    R_k=B_k\log_{2}(1+\gamma^c_k),
\end{equation}
where $B_k$ is the allocated bandwidth for device $k$. Therefore, the required time for uploading at device $k$ is $T^c_k=z/R_k$. 
The total required time for training and uploading should satisfy a delay constraint $T_{th}$ as excessively long transmission delays would slow down the global training process, so we have
\begin{equation}
    T^{train}_k+T^c_k \leq T_{th}, k=1,2,...,K.
\end{equation}
Given the delay constraint, the required minimum transmission data rate is $R^{req}_k=z/(T_{th}-T^{train}_k)$.
A successful transmission occurs if the uploading data rate satisfies $R_k \geq R^{req}_k$.
Otherwise, the transmission fails due to outage and the device is regarded as dropped in that training round. And the dropout probability of device $k$ can be given by
\begin{equation}
\label{eq:p_out}
\begin{aligned}
    P^{out}_k &= \rm Pr\{ \log_{2}(1+\gamma^c_k) \leq \it R^{req}_k/B_k \} \\ &=\rm 1-exp\Bigg(- \frac{\it \sigma^2_c d_{k,s}^{\alpha_p/2}}{\it p^c_k G_k \rm \Omega \it ^c_k}\Big(2^{\it R^{req}_k/B_k}-1\Big)\Bigg),
    \end{aligned}
\end{equation}
which directly impacts the global aggregation and convergence performance of the learning process. Therefore, the probability of successful uploading is $q_k = 1-P^{out}_k$. The maximum transmit power for uploading is limited by
\begin{equation}
    p^c_\text{min} \leq p^c_k \leq p^c_\text{max}, k=1,2,...,K.
\end{equation}

In this work, we assume that the upload success events of different devices are independent in each round and are modeled by a Bernoulli random variable $C^{(t)}_k \sim \text{Bernoulli}(q_k)$. $C^{(t)}_k \in \{0,1\}$ indicates whether device $k$ successfully uploads at iteration $t$, where
\begin{equation}
    C^{(t)}_k=
    \begin{cases}
        1, \quad \text{with probability}\ 1-P^{out}_k, \\
        0, \quad \text{with probability}\ P^{out}_k.
    \end{cases}
\end{equation}

We define $S_t=\{k:C^{(t)}_k=1\}$ as the active set of devices whose updates are successfully received at round $t$, the central server performs global aggregation upon receiving the model parameters from the devices as
\begin{equation}
    \mathbf{w}_{t+E_L}=\sum_{k\in S_t}\rho^{(t)}_k\mathbf{w}_{t+E_L}^k,
\end{equation}
the aggregation weight at the $t$-th iteration is defined as 
\begin{equation}
    \rho^{(t)}_k=
    \begin{cases}
        \frac{D_kC^{(t)}_k}{W_t}, &W_t>0,\\
        0, \quad &W_t=0,
    \end{cases}
\end{equation}
$W_t=\sum_{k \in S_t}D_k=\sum_k D_kC^{(t)}_k$ is the total data weight of participating devices, which can be approximated by $E[W_t]=\sum_k D_k E[C_k^{(t)}]\approx\sum_k D_kq_k$.

\subsection{Energy consumption model}
The total energy budget $E_k$ at each device is consumed across three stages: sensing, local training, and model uploading. It should be noted that the communication energy is consumed regardless of whether the model uploading is successful. The energy budget constraint of device $k$ can be expressed as \cite{wan2021convergence}
\begin{equation}
    p^s_kT^s + E_G(\varepsilon)[p^c_k\hat{T}^c_k + E^{train}_k] \leq E_k, k=1,2,...,K,
\end{equation}
where $\hat{T}^c_k = T_{th}-T_k^{train}$ denotes the maximal allowable transmission delay, and the energy consumption for local computation $E^{train}_k$ is calculated as
\begin{equation}
    E^{train}_k= \kappa_kD_k, \quad \forall k,
\end{equation}
where $\kappa_k$ reflects the computational cost per sample depending on the specific processor on device $k$. Specifically, $\kappa_k$ captures the joint effects of CPU clock frequency, hardware architecture, and the number of CPU cycles required per sample.

\subsection{Problem formulation}
\IEEEpubidadjcol
To jointly optimize the sensing and communication resources for the proposed ISAC-FL framework, we now formulate a resource allocation problem whose goal is to minimize the global loss function $F(\mathbf{w})$ while satisfying energy budget and power constraints at each device as
\begin{subequations}
\label{P:P1}
\begin{align}
    \underset{\{\mathbf{L}^s, \mathbf{P}^s, \mathbf{P}^c\}}{\text{min}}
        & \quad F(\mathbf{w})   \label{eq:cost}\\
    \mbox{s.t.} \quad 
        & 1 \leq L^s_k \leq L^s_\text{max}, \quad \forall k, \label{eq:cost1}\\
        &p^s_kT^s + E_G(\varepsilon)[p^c_k\hat{T}^c_k + E^{train}_k] \leq E_k, \quad \forall k, \label{eq:cost2}\\
        &p^s_\text{min} \leq p^s_k \leq p^s_\text{max}, \quad \forall k, \label{eq:cost3}\\
        &p^c_\text{min} \leq p^c_k \leq p^c_\text{max}, \quad \forall k, \label{eq:cost4}
\end{align}
\end{subequations}
where $\mathbf{L}^s=[L^s_1,L^s_2,...,L^s_K]$, $\mathbf{P}^s=[p^s_1,p^s_2,...,p^s_K]$ and $\mathbf{P}^c=[p^c_1,p^c_2,...,p^c_K]$. Constraint \eqref{eq:cost1} represents the range of available number of snapshots, and \eqref{eq:cost2} is the energy consumption budget of each device. Constraints \eqref{eq:cost3} and \eqref{eq:cost4} are the maximal sensing and communication transmit power constraint. 
The optimization problem aims to minimize the global loss function $F(\mathbf{w})$ by jointly optimizing the sensing and communication resource allocation. However, directly solving \eqref{P:P1} is intractable since $F(\mathbf{w})$ depends on the entire FL training trajectory and has no closed-form expression with respect to the optimization variables. To obtain a tractable objective, we analyze the convergence behavior of the FL algorithm in the following section, which establishes an explicit relationship between the resource allocation variables and the learning performance.

\section{Convergence Analysis of ISAC-FL}
In this section, we analyze the convergence behavior of the proposed ISAC-FL framework to derive a tractable surrogate for the original optimization objective $F(\mathbf{w})$. We first introduce the necessary assumptions, then establish a closed-form upper bound on the FL optimality gap, which explicitly reveals how the sensing and communication resource allocation jointly governs the convergence performance.

\subsection{Assumptions}
\IEEEpubidadjcol
Since the objective in problem \eqref{P:P1} involves the global loss function $F(\mathbf{w})$, which is determined by the entire FL training process, directly solving the original problem \eqref{P:P1} is generally intractable.
Instead, we analyze the convergence behavior of the FL algorithm to derive a tractable surrogate objective.
We first introduce the following assumptions on the loss functions, which are widely adopted in the literature \cite{chen2020joint, hu2025differentially}.

\textbf{Assumption 1.} We assume the loss functions at each device $k=1,2,...,K$ are strongly convex with a positive parameter $\mu$, such that
\vspace{-1.5mm}
\begin{equation}
\begin{aligned}
    F_k(y) \geq F_k(x) + (y-x)^T\nabla F_k(x) + \frac{\mu}{2}\parallel y-x\parallel^2_2, \quad \forall k.
\end{aligned}
\end{equation}

\textbf{Assumption 2.} We assume the local loss functions at each device $k=1,2,...,K$ are $L$-smooth, such that
\vspace{-1.5mm}
\begin{equation}
    \parallel \nabla F_k(y) - \nabla F_k(x) \parallel \leq L \parallel y-x \parallel, \quad \forall k.
\end{equation}
Note that $L$-smooth also holds for the global loss $F(\mathbf{w})$ under the weighted sum of $F_k(\mathbf{w})$.
These two assumptions are standard and hold for a wide range of commonly used loss functions, including mean squared error, logistic regression loss and cross entropy \cite{friedlander2012hybrid}. Beyond the conditions above, the considered ISAC system introduces additional factors that influence convergence, including the quality and quantity of locally sensed training data as well as the reliability of model uploading. To capture the effects of these factors on convergence behavior, we make the following assumptions.

\textbf{Assumption 3.} The variance of stochastic gradient computed on noisy sensing data at device $k$ is upper-bounded by
\begin{small}
\begin{equation}
\label{eq:assumption3}
    E[\parallel \tilde{g}_k(\mathbf{w}^k_t;\xi)-g_k(\mathbf{w}^k_t) \parallel^2] \leq \frac{\sigma_k^2(\gamma^s_k)}{D_k}.
\end{equation}
\end{small}Here, $\sigma^2_k(\gamma^s_k)$ is a sensing SNR-dependent function that characterizes how the quality of locally sensed data influences the gradient estimation error, and its explicit form is derived in Section \ref{sec_sensing_quality_analysis}.

\textbf{Assumption 4.} The gap between the local gradient on each device and global gradient is upper-bounded by \cite{haddadpour2019convergence}
\begin{small}
\begin{equation}
    \sum^K_{k=1}\parallel \nabla F_k(\mathbf{w}) - \nabla F(\mathbf{w}) \parallel^2 \leq \zeta^2,
\end{equation}
\end{small}
where the constant $\zeta^2$ quantifies the degree of non-IID, and $\zeta^2=0$ when the data are IID across devices. This assumption is practically justified in the considered system, where devices sense the same environment from different locations, resulting in heterogeneous yet correlated local datasets.

\subsection{Sensing-Dependent Gradient Noise Analysis}
\label{sec_sensing_quality_analysis}
In the considered ISAC-FL system, the training data at each device is collected through the sensing process, which inevitably introduces AWGN noise into the samples. To characterize this effect analytically, we establish the following analysis, which shows that the gradient noise variance is inversely proportional to both the sensing SNR $\gamma^s_k$ and the local dataset size $D_k$. To facilitate the convergence analysis, we consider the average sensing SNR by taking the expectation over the small-scale fading $\psi_k$, i.e., $E[\psi_k]=\Omega^s_k$, such that $\gamma^s_k=p^s_kG_kL^s_k \Omega^s_k/(\sigma^2_s d^{\alpha_p}_k)$ in the subsequent analysis.
\begin{Pro}
Suppose the gradient of loss function $F(\mathbf{w})$ is $L_x$-Lipschitz continuous with respect to the input sample, i.e., $\parallel \nabla F_k(\mathbf{w}^t_k;\tilde{\xi})-\nabla F_k(\mathbf{w}^t_k;\xi) \parallel \leq L_x \parallel \tilde{\xi}-\xi \parallel$, the upper bound of gradient noise can be denoted by
\begin{small}
\begin{equation}
    \sigma_k^2(\gamma^s_k)=\frac{\sigma_0^2}{\gamma^s_k},
\end{equation}
\end{small}where $\sigma^2_0$ is a constant. Therefore, Eq. \eqref{eq:assumption3} can be rewritten as 
\begin{small}
\begin{equation}
    E[\parallel \tilde{g}_k(\mathbf{w}^k_t;\xi)-g_k(\mathbf{w}^k_t) \parallel^2] \leq \frac{\sigma_0^2}{D_k \gamma^s_k}.
\end{equation}    
\end{small}
\end{Pro}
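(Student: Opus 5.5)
The plan is to model each sensed sample as $\tilde{\xi}=\xi+\mathbf{n}$, where $\xi$ is the clean sample and $\mathbf{n}$ is effective AWGN after coherent accumulation. Normalize the clean sample energy so that $E[\parallel\xi\parallel^2]\le P_\xi$. Under coherent integration of $L^s_k$ snapshots, the per-entry noise power relative to signal power is $1/\gamma^s_k$. This gives $E[\parallel\mathbf{n}\parallel^2]=c_k P_\xi/\gamma^s_k$, with $c_k$ a dimension-dependent constant. We use the averaged SNR $\gamma^s_k=p^s_kG_kL^s_k\Omega^s_k/(\sigma^2_sd^{\alpha_p}_k)$ introduced before the proposition, so this step is a direct consequence of \eqref{eq:sensing_snr}.

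Next, we decompose the error of the noisy mini-batch gradient over the local dataset $\mathcal{D}_k$. Write $\tilde{g}_k(\mathbf{w}^k_t;\xi)=\frac{1}{D_k}\sum_{i}\nabla F_k(\mathbf{w}^k_t;\tilde{\xi}_i)$, so that
\begin{equation*}
\tilde{g}_k-g_k=\frac{1}{D_k}\sum_{i=1}^{D_k}\big(\nabla F_k(\mathbf{w}^k_t;\tilde{\xi}_i)-\nabla F_k(\mathbf{w}^k_t;\xi_i)\big)+\Big(\frac{1}{D_k}\sum_i\nabla F_k(\mathbf{w}^k_t;\xi_i)-g_k\Big).
\end{equation*}
Each term in the first sum is bounded in norm by $L_x\parallel\mathbf{n}_i\parallel$ by the Lipschitz hypothesis. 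Linearize around $\xi_i$, i.e., use a first-order expansion $\nabla F_k(\mathbf{w};\xi_i+\mathbf{n}_i)\approx \nabla F_k(\mathbf{w};\xi_i)+J_i\mathbf{n}_i$ with $\parallel J_i\parallel\le L_x$. Then the summands are approximately zero-mean, because the noise is zero-mean, and they are independent across $i$, because the noise is independent across frames. The cross terms therefore vanish in expectation. We obtain $E\parallel\frac{1}{D_k}\sum_i(\cdot)\parallel^2\le \frac{1}{D_k^2}\sum_i L_x^2E\parallel\mathbf{n}_i\parallel^2=\frac{L_x^2c_kP_\xi}{D_k\gamma^s_k}$.

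We then handle the sampling term. This term does not depend on the SNR. Following the paper's framing, which attributes the gradient noise to sensing impairments, we either absorb it into the convergence analysis separately or bound it by a constant times $1/D_k$. In the latter case it can be merged into the same form in the regime of interest, where $\gamma^s_k$ is bounded above by $p^s_\text{max}$ and $L^s_\text{max}$. This works because for bounded $\gamma^s_k$ any $C/D_k$ satisfies $C/D_k\le C\gamma^s_{\max}/(D_k\gamma^s_k)$. Setting $\sigma_0^2=L_x^2\max_k c_kP_\xi$ (plus the absorbed constant) yields $\sigma_k^2(\gamma^s_k)=\sigma_0^2/\gamma^s_k$ and the stated bound.

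The main obstacle is the cross-term and zero-mean step. The Lipschitz condition alone only gives $\parallel\sum_i\cdot\parallel\le L_x\sum_i\parallel\mathbf{n}_i\parallel$, and squaring that loses the $1/D_k$ factor. To recover $1/D_k$, I would rely on the linearization plus independence argument, or assume the noise-induced gradient perturbation is conditionally zero-mean. I would state this explicitly as a modeling approximation, consistent with the paper's plan to validate the $1/(D_k\gamma^s_k)$ scaling empirically on MNIST.
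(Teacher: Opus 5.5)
Your proposal follows the paper's proof in Appendix B essentially step for step. Both use the model $\tilde{\xi}=\xi+n$ and split $\tilde{g}_k-g_k$ into two pieces: a noise-induced term, bounded via $L_x$ and the identification $\gamma^s_k=E[\|\xi\|^2]/E[\|n\|^2]$, and a $c_0/D_k$ sampling term, absorbed through $\gamma^s_k\le\gamma^s_{\max}$. This yields $\sigma_0^2$ of the form $2L_x^2E[\|\xi\|^2]+2c_0\gamma^s_{\max}$; your constant differs only by the factor $2$ from $\|a+b\|^2\le 2\|a\|^2+2\|b\|^2$ and by your $c_k$.

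The one place you are more careful than the paper is the $1/D_k$ factor on the noise-induced term. The paper simply asserts it, even though Lipschitz continuity alone does not give it. Your zero-mean, independent-across-samples (linearized) argument is exactly what that step needs, and you are right to state it explicitly as a modeling approximation.
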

\begin{proof}
    Please refer to Appendix B.
\end{proof}
The above proposition reflects the following properties:
\begin{itemize}
    \item The devices with low sensing SNR can hinder the convergence of the global model, as low-quality samples can amplify the gradient estimation errors by introducing additional bias into the gradient computation.
    \item Increasing the number of sensed samples can reduce gradient estimation errors, thereby accelerating convergence.
    \item There exists an inherent trade-off between dataset size and data quality in the considered system, as collecting more samples requires fewer snapshots per frame, which degrades the sensing SNR and in turn increases gradient noise during training.
\end{itemize}
\IEEEpubidadjcol
This captures the joint effect of sensing data quality and dataset size on gradient estimation, which directly impacts the convergence of FL as shown in Theorem \ref{theo:convergence}. 
To empirically validate Assumption 3, Fig. \ref{fig_snr_vs_grad_acc} presents the relationship between sensing SNR and gradient variance measured on the MNIST dataset. As shown in \ref{fig_snr_vs_gradvar}, the gradient variance decreases monotonically as sensing SNR increases. Fig. \ref{fig_snr_vs_acc} further confirms that higher sensing SNR leads to improved test accuracy, demonstrating the direct impact of data quality on training. Fig. \ref{fig_loss_curves} illustrates that higher SNR accelerates convergence and yields lower loss, while larger $D_k$ consistently reduces loss across all SNR levels. These observations jointly support the modeling of the gradient noise upper bound as inversely proportional to both $\gamma^s_k$ and $D_k$ in Assumption 3.

\begin{figure}[!t]
\setlength{\belowcaptionskip}{-1cm}
\centering
\subfigure[]{\includegraphics[scale=0.3]{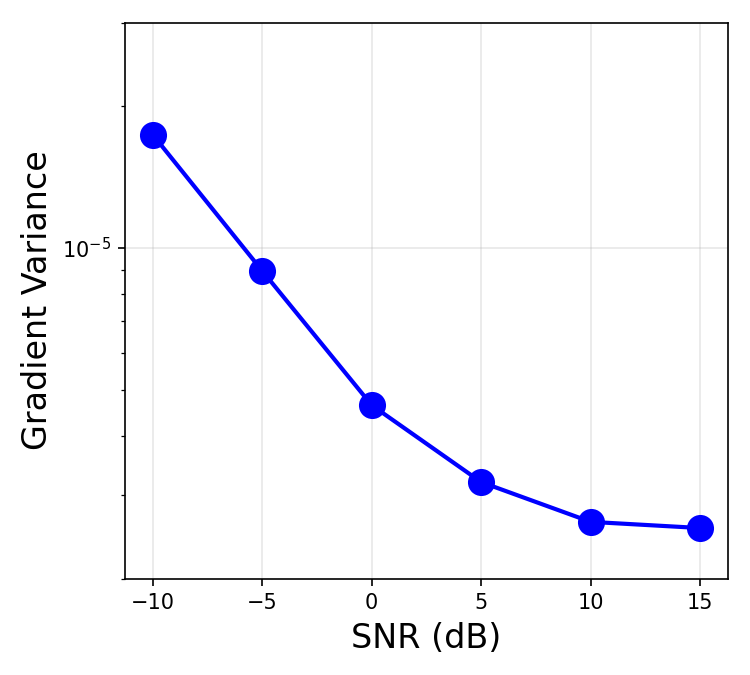}%
\label{fig_snr_vs_gradvar}}
\hfil
\subfigure[]{\includegraphics[scale=0.3]{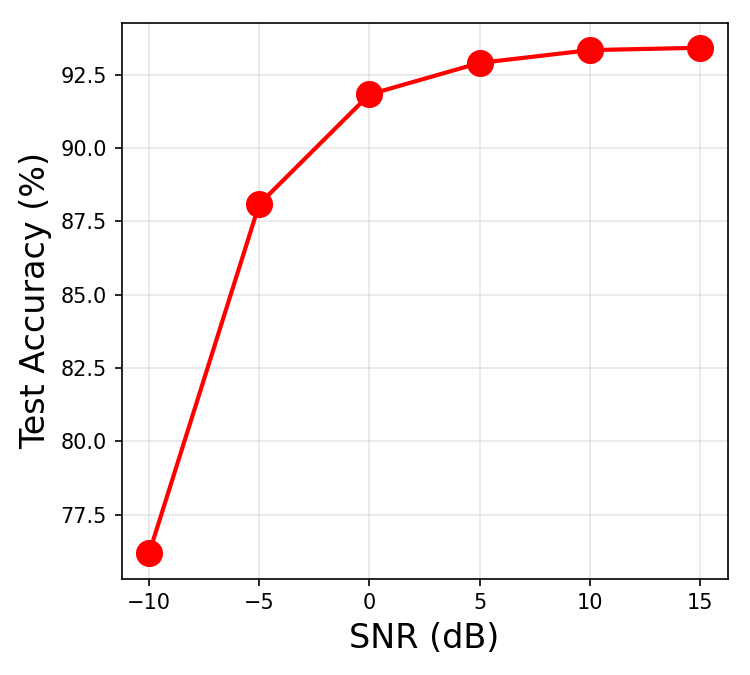}%
\label{fig_snr_vs_acc}}
\caption{Empirical validation of Assumption 3 on the MNIST dataset using a 3-layer MLP network. $(a)$ Gradient variance monotonically decreases with sensing SNR. $(b)$ Test accuracy increases with sensing SNR.}
\label{fig_snr_vs_grad_acc}
\end{figure}

\begin{figure}[!t]
\setlength{\abovecaptionskip}{-0.1cm}
\setlength{\belowcaptionskip}{-1cm}
\centering
\subfigure[$D_k=1000$.]{\includegraphics[scale=0.3]{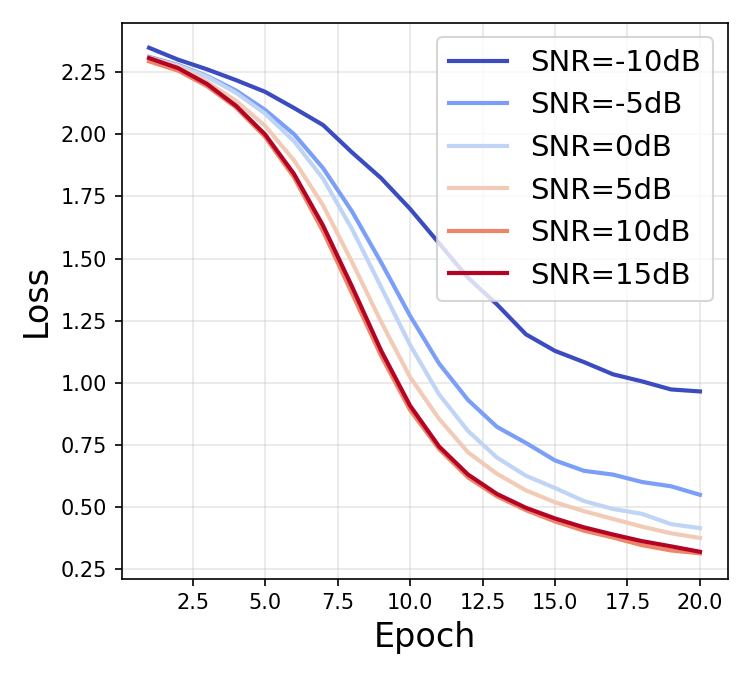}%
\label{fig_loss_curves_D1000}}
\subfigure[$D_k=5000$.]{\includegraphics[scale=0.3]{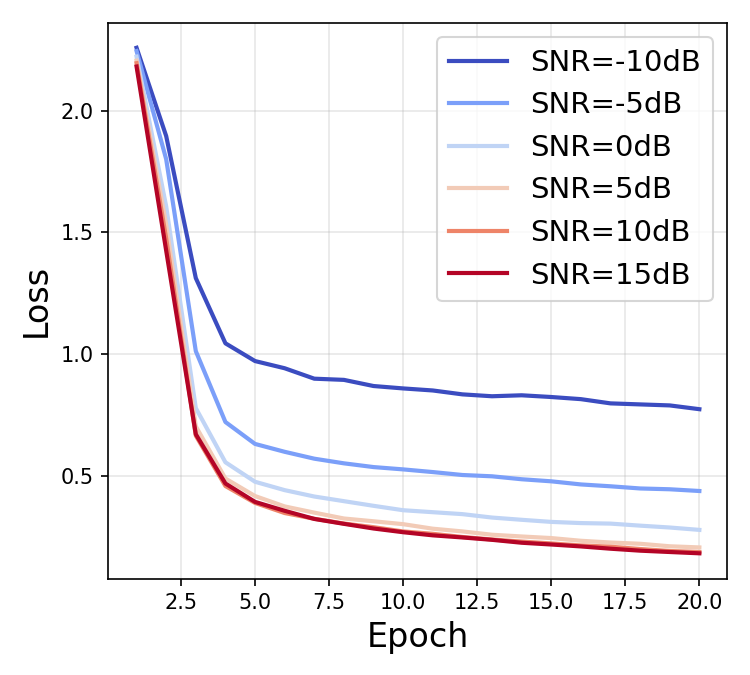}%
\label{fig_loss_curves_D5000}}
\caption{Training loss curves under different sensing SNR levels for varying local dataset sizes $D_k$.}
\label{fig_loss_curves}
\end{figure}

\subsection{Convergence Bound}
\IEEEpubidadjcol
Based on the assumptions above, we now derive a bound on the aggregated gradient noise variance for establishing the convergence bound later in Theorem 1.
\begin{lemma}\label{lm:gradient_noise}
    Let $\varepsilon_t=\tilde{g}_t - g_t$ denote the aggregated gradient noise, the variance of the aggregated gradient noise satisfies
\begin{equation}
    E[\parallel \varepsilon_t \parallel^2] \leq \frac{\sigma^2_0}{\gamma^s_{\text{min}}}\cdot\frac{K}{\sum_k D_kq_k},
\end{equation}
where $\gamma^s_{\text{min}}=\text{min}_k\{\gamma^s_k\}$.
\end{lemma}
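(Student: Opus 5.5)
We plan to write the aggregated noise as a weighted sum of the per-device noises, $\varepsilon_t=\sum_{k\in S_t}\rho^{(t)}_k\varepsilon_{k,t}$, where $\varepsilon_{k,t}=\tilde{g}_k(\mathbf{w}^k_t;\xi)-g_k(\mathbf{w}^k_t)$ and $\rho^{(t)}_k=D_kC^{(t)}_k/W_t$. The steps are as follows.

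First, conditioned on the active set $S_t$ (equivalently on $\{C^{(t)}_k\}$), we treat the per-device noises as zero-mean and independent across devices, since each device draws its own samples $\xi$. Taking the conditional expectation then removes the cross terms:
\begin{equation}
E[\|\varepsilon_t\|^2\mid S_t]=\sum_{k\in S_t}(\rho^{(t)}_k)^2E[\|\varepsilon_{k,t}\|^2].
\end{equation}
If independence is unavailable, we instead fall back on the Cauchy--Schwarz bound $\|\sum_{k\in S_t}\rho_k\varepsilon_k\|^2\le |S_t|\sum_{k\in S_t}\rho_k^2\|\varepsilon_k\|^2$. This is where the factor $K$ in the statement would naturally appear, via $|S_t|\le K$.

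Second, we apply the Proposition, which gives $E[\|\varepsilon_{k,t}\|^2]\le \sigma_0^2/(D_k\gamma^s_k)\le \sigma_0^2/(D_k\gamma^s_{\min})$. Combining this with the weights yields
\begin{equation}
E[\|\varepsilon_t\|^2\mid S_t]\le K\sum_{k\in S_t}\frac{D_k^2}{W_t^2}\cdot\frac{\sigma_0^2}{D_k\gamma^s_{\min}}=\frac{K\sigma_0^2}{\gamma^s_{\min}}\cdot\frac{\sum_{k\in S_t}D_k}{W_t^2}=\frac{K\sigma_0^2}{\gamma^s_{\min}W_t}.
\end{equation}
This holds for $W_t>0$. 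When $W_t=0$, all weights are zero and the noise term is absent (the model is not updated), so that event contributes nothing.

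Third, we take the expectation over the Bernoulli upload indicators. This is the main obstacle, because $E[1/W_t]\ge 1/E[W_t]$ by Jensen's inequality, so an exact bound cannot come from Jensen. Following the approximation already introduced in the system model, $E[W_t]\approx\sum_kD_kq_k$, we will replace $1/W_t$ by $1/E[W_t]$. This is a first-order (delta-method) approximation, justified when many devices participate so that $W_t$ concentrates around its mean. The result is the stated bound $\frac{\sigma_0^2}{\gamma^s_{\min}}\cdot\frac{K}{\sum_kD_kq_k}$.

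We will state explicitly that the inequality holds up to this concentration approximation. A rigorous alternative would use a Chernoff bound on $W_t$ to control $E[1/W_t\,\mathbb{1}\{W_t>0\}]$, at the cost of a constant factor. Finally, the slack of the factor $K$ from the Cauchy--Schwarz fallback can be used to absorb such constants if needed.
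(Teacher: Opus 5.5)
Your argument is correct at the paper's level of rigor, since both proofs rest on a ratio-of-expectations approximation. The difference is in how the random weights are handled.

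\emph{The paper's route.} After cancelling cross terms by independence, the paper loosens $(\rho^{(t)}_k)^2\le\rho^{(t)}_k$ and plugs in the per-device bound. It then approximates $E[\rho^{(t)}_k]\approx D_kq_k/\sum_jD_jq_j$ device by device, so that the $1/D_k$ cancels. This gives $\sum_kq_k/\sum_kD_kq_k$, and only then does $K$ enter, through $\sum_kq_k\le K$.

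\emph{Your route.} You keep the squared weights, which telescope exactly to $\sum_{k\in S_t}D_k/W_t^2=1/W_t$. You then make a single approximation, $E[1/W_t]\approx 1/E[W_t]$. This buys two things.

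\begin{itemize}
\item Under the independence assumption, your route shows the factor $K$ is superfluous. To the same approximation you get $\sigma_0^2/(\gamma^s_{\min}\sum_kD_kq_k)$.
\item Your Cauchy--Schwarz fallback shows the stated $K$-version needs no cross-term cancellation at all. Substituting the per-device bound into the paper's post-(a) expression gives exactly $|S_t|\sigma_0^2/(\gamma^s_{\min}W_t)$, which is your fallback quantity. That expression also follows directly from convexity of $\|\cdot\|^2$, because $\sum_{k\in S_t}\rho^{(t)}_k=1$. So the paper pays the fallback price while still invoking a zero-mean assumption. That assumption is itself questionable, since Appendix B calls the sensing-induced term a bias.
\end{itemize}

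What the paper's per-device approximation buys is the quantity $\bar\rho_k$, which it reuses in Theorem 1.

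One correction: drop the claim that a Chernoff bound on $W_t$ makes the inequality rigorous, with the constant absorbed by the slack in $K$. As an exact inequality the lemma fails. Consider the following instance:
\begin{itemize}
\item $K=2$, with $D_1=1$, $q_1\approx 1$, $D_2=100$, $q_2=1/2$;
\item all $\gamma^s_k=\gamma^s_{\min}$;
\item independent zero-mean per-device noises that attain the bound of Proposition 1.
\end{itemize}
Then $E[\|\varepsilon_t\|^2]=\frac{\sigma_0^2}{\gamma^s_{\min}}E[\mathbf{1}\{W_t>0\}/W_t]\approx 0.5\,\sigma_0^2/\gamma^s_{\min}$. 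This is dominated by the event that only the small-dataset device uploads. The claimed bound is only $\frac{\sigma_0^2}{\gamma^s_{\min}}\cdot\frac{2}{51}$.

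A Chernoff argument does give $E[\mathbf{1}\{W_t>0\}/W_t]\le c/E[W_t]$. However, $c$ must grow with $\max_kD_k/\min_kD_k$, so in general it cannot be absorbed into the factor $K$. State the lemma as holding under the approximation $E[1/W_t]\approx 1/E[W_t]$, which is what the paper implicitly does through its $\approx$ step.
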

\begin{proof}
    Please refer to Appendix C.
\end{proof}
 
\begin{theorem}\label{theo:convergence}
    Given $\eta \leq \frac{1}{2L}$, the expected optimality gap of federated learning after $t$ iterations is bounded by
\begin{small}
\begin{equation}
\begin{aligned}
    E[F(\mathbf{w}_{t})-F^*] &\leq A^t(F(\mathbf{w}_0)-F^*) \\
    &+ \frac{2L\eta^2(1-A^t)}{1-A}\Big(\frac{\sigma^2_0}{\gamma^s_{\text{min}}}\cdot\frac{K}{\sum_k D_kq_k}+ \zeta^2\sum^K_{k=1}\bar{\rho}_k^2 \Big),
\end{aligned}
\end{equation}
\end{small}
where $A=1-2\mu\eta+2L\mu\eta^2$ and $F^*$ is the minimal value of the loss function.
\end{theorem}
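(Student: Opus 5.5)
The plan is to establish a one-step contraction of the form $E[F(\mathbf{w}_{t+1})-F^*] \le A\,E[F(\mathbf{w}_t)-F^*] + 2L\eta^2 B$, where $B$ is the bracketed noise term. The $t$-step bound then follows by unrolling the recursion and summing the geometric series $\sum_{i=0}^{t-1}A^i=(1-A^t)/(1-A)$.

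To make this workable, I will view one global update as an effective gradient step $\mathbf{w}_{t+1}=\mathbf{w}_t-\eta\tilde{g}_t$. Here $\tilde{g}_t=\sum_{k\in S_t}\rho_k^{(t)}\tilde g_k(\mathbf{w}_t;\xi)$ is the aggregated stochastic gradient computed on noisy sensed data over the active set. I will decompose it as
\begin{equation*}
\tilde g_t=\nabla F(\mathbf{w}_t)+\varepsilon_t+\delta_t,
\end{equation*}
with $\varepsilon_t=\tilde g_t-g_t$ the aggregated sensing/sampling noise treated in Lemma \ref{lm:gradient_noise}, and $\delta_t=g_t-\nabla F(\mathbf{w}_t)=\sum_k(\rho_k^{(t)}-\rho_k)\nabla F_k(\mathbf{w}_t)$ the aggregation bias caused by dropouts and heterogeneity. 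Writing $\bar\rho_k=E[\rho_k^{(t)}]\approx D_kq_k/\sum_j D_jq_j$, I expect to bound $E\|\delta_t\|^2$ by expressing it through $\nabla F_k-\nabla F$ (the weights sum to one whenever $W_t>0$). Cauchy--Schwarz together with Assumption 4 should then give $\zeta^2\sum_k\bar\rho_k^2$.

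Next I apply $L$-smoothness (Assumption 2):
\begin{equation*}
F(\mathbf{w}_{t+1})\le F(\mathbf{w}_t)-\eta\langle\nabla F(\mathbf{w}_t),\tilde g_t\rangle+\tfrac{L\eta^2}{2}\|\tilde g_t\|^2 .
\end{equation*}
For the inner product, I use the zero-mean property of $\varepsilon_t$ conditioned on $\mathbf{w}_t$ and $S_t$, and handle the cross term with $\delta_t$ via $-\langle a,b\rangle\le \frac12\|a\|^2+\frac12\|b\|^2$ where needed. For the quadratic term, I bound $\|\tilde g_t\|^2\le 2\|\nabla F\|^2+2\|\varepsilon_t+\delta_t\|^2$ (further splitting as needed), which produces coefficients like $-\eta(1-2L\eta)\|\nabla F\|^2$ or similar, plus $2L\eta^2(E\|\varepsilon_t\|^2+E\|\delta_t\|^2)$. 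The condition $\eta\le 1/(2L)$ keeps the gradient coefficient negative. Strong convexity (Assumption 1) gives the Polyak--Łojasiewicz inequality $\|\nabla F(\mathbf{w})\|^2\ge 2\mu(F(\mathbf{w})-F^*)$, which converts the negative gradient term into the factor $1-2\mu\eta+2L\mu\eta^2=A$. Plugging in Lemma \ref{lm:gradient_noise} for $E\|\varepsilon_t\|^2$ and the $\delta_t$ bound above yields the recursion, and unrolling it finishes the proof.

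The main obstacle is the constant bookkeeping that makes the coefficient come out exactly as $A=1-2\mu\eta+2L\mu\eta^2$ while the noise prefactor is exactly $2L\eta^2$. This requires choosing the split of $\|\tilde g_t\|^2$ carefully (factor 2 on $\|\nabla F\|^2$, giving $L\eta^2\|\nabla F\|^2$, together with the $-\eta\|\nabla F\|^2$ term). It also requires treating the bias cross term so that it is absorbed into the $2L\eta^2$ noise term rather than creating an additional $O(\eta)$ term. A secondary subtlety is the treatment of the random weights $\rho_k^{(t)}$ and the $W_t=0$ event. I would handle both with the approximation $E[W_t]\approx\sum_k D_kq_k$ already used in the paper, replacing $\rho^{(t)}_k$ by $\bar\rho_k$ in the bias bound. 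I would also justify the collapse of $E_L$ local steps to one effective step, either by taking $E_L=1$ or by indexing $t$ over aggregation rounds as the paper's notation suggests.
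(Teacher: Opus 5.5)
\textbf{There is a genuine gap, at the inner-product step.} Your skeleton otherwise matches the paper's proof step for step. That includes $L$-smoothness, the split $\|\tilde g_t\|^2\le 2\|\nabla F(\mathbf{w}_t)\|^2+2\|\varepsilon_t+\delta_t\|^2$ followed by $2\|\varepsilon_t\|^2+2\|\delta_t\|^2$, and Lemma~\ref{lm:gradient_noise} for $E\|\varepsilon_t\|^2$. It also includes Cauchy--Schwarz with Assumption 4 and $E[(\rho_k^{(t)})^2]\approx\bar\rho_k^2$ for $E\|\delta_t\|^2$; your $\delta_t$ is exactly the paper's \emph{aggregation error} term. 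The PL inequality and the unrolling are the same as well.

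\textbf{Why Young's inequality cannot give the stated constants.} The problem is the cross term $-\eta\langle\nabla F(\mathbf{w}_t),\delta_t\rangle$, which you leave as the unresolved \emph{main obstacle}. With $-\eta\langle a,b\rangle\le\frac{\eta}{2}\|a\|^2+\frac{\eta}{2}\|b\|^2$, the gradient coefficient becomes $-\frac{\eta}{2}(1-2L\eta)$. The contraction factor is then $1-\mu\eta+2L\mu\eta^2\neq A$, and $E\|\delta_t\|^2$ picks up the first-order coefficient $\frac{\eta}{2}+2L\eta^2$. A weighted Young inequality only moves the problem. Making the $\|\delta_t\|^2$ coefficient $O(\eta^2)$ forces a coefficient of order $1/L$, independent of $\eta$, onto $\|\nabla F(\mathbf{w}_t)\|^2$. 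For small $\eta$ that overwhelms the $-\eta$ descent.

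\textbf{No argument works if $\delta_t$ has nonzero conditional mean.} The stated bound's asymptotic gap is $\frac{2L\eta^2}{1-A}(\cdots)=\frac{L\eta}{\mu(1-L\eta)}(\cdots)$, which vanishes as $\eta\to0$. A persistent bias $E[\delta_t\mid\mathbf{w}_t]=\sum_k(\bar\rho_k-\rho_k)\nabla F_k(\mathbf{w}_t)\neq 0$ instead makes small-step iterates track the minimizer of $\sum_k\bar\rho_k F_k$ rather than of $F$. That leaves an optimality gap which in general does not vanish when $\zeta^2>0$.

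\textbf{The missing ingredient.} You need the paper's approximate-unbiasedness step \eqref{eq:no1}, $g_t=E[\tilde g_t]\approx\nabla F(\mathbf{w}_t)$, i.e.\ $E[\delta_t\mid\mathbf{w}_t]\approx 0$. You can justify it in three steps:
\begin{itemize}
\item identify the objective weights $\rho_k$ with $\bar\rho_k=D_kq_k/\sum_j D_jq_j$;
\item use the ratio approximation $E[D_kC_k^{(t)}/W_t]\approx D_kq_k/\sum_j D_jq_j$, ignoring $W_t=0$;
\item note that $S_t$ is independent of $\mathbf{w}_t$.
\end{itemize}
Then $E\langle\nabla F(\mathbf{w}_t),\tilde g_t\rangle=E\|\nabla F(\mathbf{w}_t)\|^2$ with no cross term at all, and $\delta_t$ enters only through $\frac{L\eta^2}{2}E\|\tilde g_t\|^2$. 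Your splits then give $(L\eta^2-\eta)E\|\nabla F(\mathbf{w}_t)\|^2+2L\eta^2\big(E\|\varepsilon_t\|^2+E\|\delta_t\|^2\big)$. The PL inequality yields exactly $A=1-2\mu\eta+2L\mu\eta^2$ and the prefactor $2L\eta^2$.

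Note that the gradient coefficient is $-\eta(1-L\eta)$, not $-\eta(1-2L\eta)$, so $\eta\le 1/(2L)$ is more than sufficient. State this unbiasedness as an explicit modeling approximation, as the paper does, rather than trying to absorb the bias cross term.
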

\begin{proof}
    Please refer to Appendix D.
\end{proof}

\IEEEpubidadjcol
\subsection{Problem Transformation}
The convergence bound in Theorem 1 reveals that the optimality gap is dominated by the term $\frac{\sigma^2_0}{\gamma^s_\text{min}}\cdot \frac{K}{\sum_k D_k q_k}$, which explicitly captures the joint effect of sensing quality, dataset size, and upload reliability on the learning performance. 
Since the heterogeneity term satisfies $\zeta^2\sum_k \bar{\rho}^2_k \leq \zeta^2\sum_k \bar{\rho}_k =\zeta^2$, we focus on minimizing the dominant component as a tractable surrogate for the original objective $F(\mathbf{w})$.
Accordingly, the optimization problem can be rewritten as
\begin{subequations}
\label{P:P2}
\begin{align}
    \underset{\{\mathbf{L}^s, \mathbf{P}^s, \mathbf{P}^c\}}{\text{min}}
        & \quad \frac{\sigma^2_0}{\gamma^s_{\text{min}}}\cdot\frac{K}{\sum_k D_kq_k},   \\
    \mbox{s.t.} \quad 
        & \gamma^s_\text{min}=\text{min}\{\gamma^s_1,..., \gamma^s_K\},\\
        & 1 \leq L^s_k \leq L^s_\text{max}, \quad \forall k, \\
        &p^s_kT^s + E_G(\varepsilon)[p^c_k\hat{T}^c_k + E^{train}_k] \leq E_k, \quad \forall k, \label{eq:cost_energy_constraint}\\
        &p^s_\text{min} \leq p^s_k \leq p^s_\text{max}, \quad \forall k,\\
        &p^c_\text{min} \leq p^c_k \leq p^c_\text{max}, \quad \forall k.
\end{align}
\end{subequations}
The objective function of problem \eqref{P:P2} is inversely proportional to $\gamma^s_\text{min}=\text{min}_k {\gamma^s_k}$, which means that the convergence performance is bottlenecked by the device with the lowest sensing SNR. Moreover, the objective decreases monotonically as $\sum_k D_k q_k$ increases, which represents the effective aggregated data contribution weighted by the upload success probability of each device. This implies that convergence benefits from both a larger local dataset size $D_k$ and a higher upload success probability $q_k$. However, these two terms are coupled through the energy budget constraint \eqref{eq:cost_energy_constraint}, as allocating more energy to sensing increases $D_k$ and $\gamma^s_k$ at the cost of reducing the power available for communication, and vice versa. The problem \eqref{P:P2}
therefore captures a fundamental sensing-communication trade-off, where the optimal resource allocation must balance data quality and quantity against upload reliability to maximize the overall learning performance.

\section{Proposed Solution}
The reformulated problem remains challenging to solve directly due to the coupling between the minimum sensing SNR constraint and the per-device resource variables. In this section, we propose a two-layer optimization algorithm that decouples the problem into a tractable outer search and a set of independent per-device subproblems.

\subsection{Two-Layer Optimization Framework}
To solve the optimization problem \eqref{P:P2}, we introduce the auxiliary variable $\gamma_t$, and transform the problem to 
\begin{subequations}
\label{P:P3}
\begin{align}
    \underset{\{\gamma_t, \mathbf{L}^s, \mathbf{P}^s, \mathbf{P}^c\}}{\text{max}}
        & \quad \gamma_t \cdot \sum_{k=1}^K D_kq_k,   \\
    \mbox{s.t.} \quad 
    & \gamma_t \leq \gamma^s_k, \quad \forall k, \\
        & 1 \leq L^s_k \leq L^s_\text{max}, \quad \forall k, \\
        &p^s_kT^s + E_G(\varepsilon)[p^c_k\hat{T}^c_k + E^{train}_k] \leq E_k, \quad \forall k,\\
        &p^s_\text{min} \leq p^s_k \leq p^s_\text{max}, \quad \forall k,\\
        &p^c_\text{min} \leq p^c_k \leq p^c_\text{max}, \quad \forall k.
\end{align}
\end{subequations}
It can be proved that \eqref{P:P2} and \eqref{P:P3} are equivalent. Specifically, at optimality the constraint $\gamma_t \leq \gamma^s_k $ must be tight for at least one device, since any $\gamma^*_t < \text{min}_k\{\gamma^{s*}_k\}$ could be strictly increased without violating feasibility, thereby improving the objective. Hence $\gamma^*_t = \text{min}_k\{\gamma^{s*}_k\}$ always holds at the optimal solution.
As problem \eqref{P:P3} remains non-convex due to the coupling between $\gamma_t$ and the remaining variables, we propose a two-layer framework that decouples the problem into a set of independent subproblems. For a fixed $\gamma_t$, we define
\begin{equation}
    W(\gamma_t) \triangleq \text{max} \sum_k D_k q_k, \quad A_k \triangleq \frac{G_k \Omega_k^s}{\sigma^2_sd^{\alpha_p}_k}.
\end{equation}
The problem then reduces to
\begin{equation}
    \underset{\gamma_t \in [\gamma_{lb},\gamma_{ub}]}{\text{max}} \gamma_t \cdot W(\gamma_t),
\end{equation}
where the search bounds are decided by the constraints
\begin{equation}
    \gamma_{lb}=p^s_\text{min}\cdot \text{min}_k \{A_k\}, \quad \gamma_{ub}=\text{min}_k \{A_kp^s_\text{max} L^s_\text{max}\}.
\end{equation}
\IEEEpubidadjcol
We observe that when $\gamma_t$ is small, the SNR constraint $A_k p^s_k L^s_k \geq \gamma_t$ can be easily satisfied, leaving sufficient energy for communication so that $W(\gamma_t)$ remains large and $\gamma_t W(\gamma_t)$ increases approximately linearly. As $\gamma_t$ increases, the SNR constraint tightens, forcing higher sensing power and reducing the energy available for communication, which causes $W(\gamma_t)$ to decrease. When the value of $\gamma_t$ approaches $\gamma_{ub}$, $\gamma_t W(\gamma_t)$ decreases monotonically dominated by the rapid decline of $W(\gamma_t)$. Based on the observation, we apply the golden section search method \cite{hendrix2010introduction} to efficiently find the optimal $\gamma^*_t$. At each iteration, we let
\begin{equation}
    \gamma_1=b-\frac{b-a}{\varphi}, \quad \gamma_2=a+\frac{b-a}{\varphi},
\end{equation}
where $\varphi=(1+\sqrt{5})/2$ is the golden ratio and $[a,b]$ is the current search interval. The interval is updated as
\begin{equation}
    [a,b]=
    \begin{cases}
        [\gamma_1,b], & \gamma_1 W(\gamma_1) < \gamma_2 W(\gamma_2) ,\\
        [a, \gamma_2], \quad &\text{otherwise},
    \end{cases}
\end{equation}

\subsection{Per-Device Subproblem Solution}
\IEEEpubidadjcol
For a given $\gamma_t$, the problem can be decomposed into $K$ independent subproblems. This follows from the observation that both the objective $\sum_k D_k q_k$ and all constraints are separable across devices without coupling. The overall optimal value therefore satisfies $W(\gamma_t)=\sum_k W_k(\gamma_t)$, where $W_k(\gamma_t)$ is the optimal value of the $k$-th subproblem
\begin{subequations}
\label{P:P4}
\begin{align}
    \underset{\{L^s_k, p^s_k, p^c_k\}}{\text{max}}
        & \quad D_kq_k,   \\
    \mbox{s.t.} \quad 
    & A_k p^s_k L^s_k \geq \gamma_t, \quad \forall k, \label{P:P4_snr_constrain}\\
        & 1 \leq L^s_k \leq L^s_\text{max}, \quad \forall k, \\
        &p^s_kT^s + E_G(\varepsilon)[p^c_k\hat{T}^c_k + E^{train}_k] \leq E_k, \quad \forall k,\\
        &p^s_\text{min} \leq p^s_k \leq p^s_\text{max}, \quad \forall k,\\
        &p^c_\text{min} \leq p^c_k \leq p^c_\text{max}, \quad \forall k.
\end{align}
\end{subequations}
For each subproblem \eqref{P:P4}, we fix $L^s_k$ and derive the optimal $p^{s*}_k$ and $p^{c*}_k$ in closed form. Given $L^s_k$, the SNR constraint \eqref{P:P4_snr_constrain} imposes a lower bound on $p^s_k$ as
\begin{equation}
    p^s_k \geq \frac{\gamma_t}{A_k L^s_k}.
\end{equation}
Since the term $D_kq_k$ does not depend explicitly on $p^s_k$, and increasing the value of $p^s_k$ beyond its lower bound reduces the energy available for communication, the optimal sensing power can be given by
\begin{equation}
\label{eq:optimal_ps}
    p^{s*}_k= \text{max} \{p^s_\text{min}, \frac{\gamma_t}{A_k L^s_k} \}.
\end{equation}
Then given the optimal $p^{s*}_k$, substituting $D_k=T^s/(L^s_k\tau_k)$ into the energy constraint yields the maximum allowable communication power, as
\begin{equation}
    p^c_k \leq \frac{E_k-p^{s*}_k T^s-E_G \kappa_k D_k}{E_G \hat{T}^c_k}.
\end{equation}
Since $q_k$ is strictly increasing in $p^c_k$, the optimal strategy for uploading exhausts the remaining energy budget, given by
\begin{equation}
\label{eq:optimal_pc}
    p^{c*}_k=\text{min} \{p^c_\text{max}, \frac{E_k-p^{s*}_k T^s-E_G \kappa_k D_k}{E_G \hat{T}^c_k} \}.
\end{equation}
As $L^s_k$ is a bounded integer variable, the global optimum of problem \eqref{P:P4} can be obtained by full enumeration over all candidate values of $L^s_k$. The complete procedure is summarized in \textbf{Algorithm \ref{alg:alg1}}.

\subsection{Complexity Analysis}
\IEEEpubidadjcol
The computational complexity of the proposed algorithm is analyzed in this subsection. The golden section search requires $O(\log(1/\varepsilon))$ iterations to converge to within tolerance $\varepsilon$, since each iteration reduces the search interval by a constant factor of $1/\varphi$. At each outer iteration, Algorithm \ref{alg:alg2} is executed for each of the $K$ devices independently, where the inner enumeration over $L^s_k$ incurs a cost of $O(L^s_\text{max})$ per device, with the optimal $p^{s*}_k$ and $p^{c*}_k$ obtained in closed form via \eqref{eq:optimal_ps} and \eqref{eq:optimal_pc} at each step. The total inner-layer complexity per outer iteration is therefore $O(K\cdot L^s_\text{max})$. Combining both layers, the overall complexity of Algorithm \ref{alg:alg1} is $O(\log(1/\varepsilon)\cdot K \cdot L^s_\text{max})$. It is worth noting that the complexity scales linearly with both the number of devices $K$ and the maximum snapshot number $L^s_\text{max}$, which confirms that the proposed solution remains computationally efficient as the network size grows. 

\begin{figure*}[!t]
\setlength{\abovecaptionskip}{0.cm}
\centering
\subfigure[Number of training samples $D_k$.]{\includegraphics[scale=0.37]{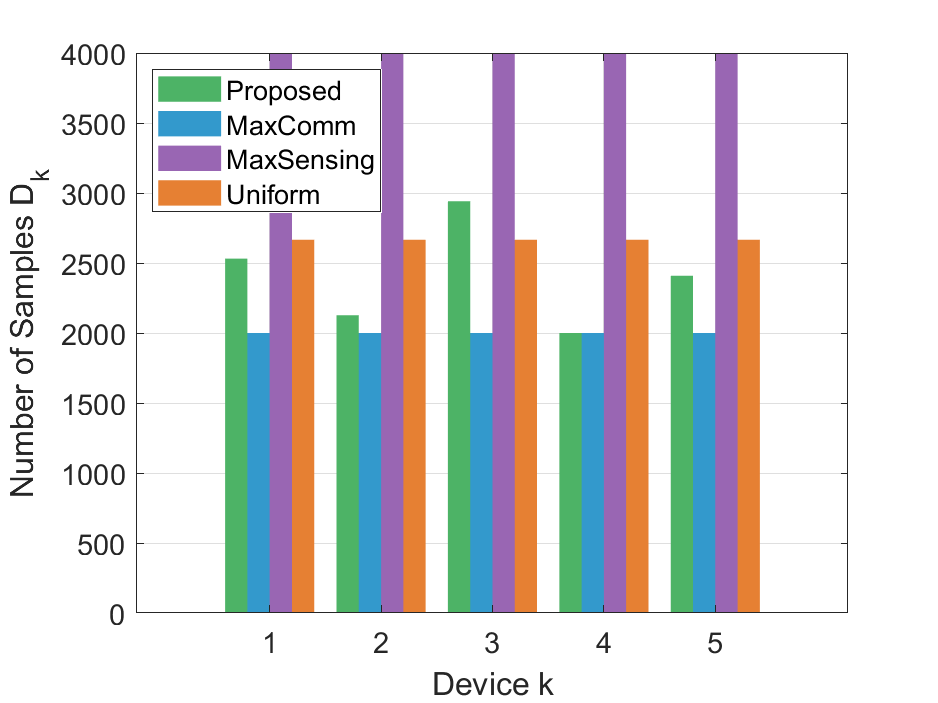}%
\label{fig_samples}}
\hfil
\subfigure[Sensing SNR $\gamma^s_k$.]{\includegraphics[scale=0.37]{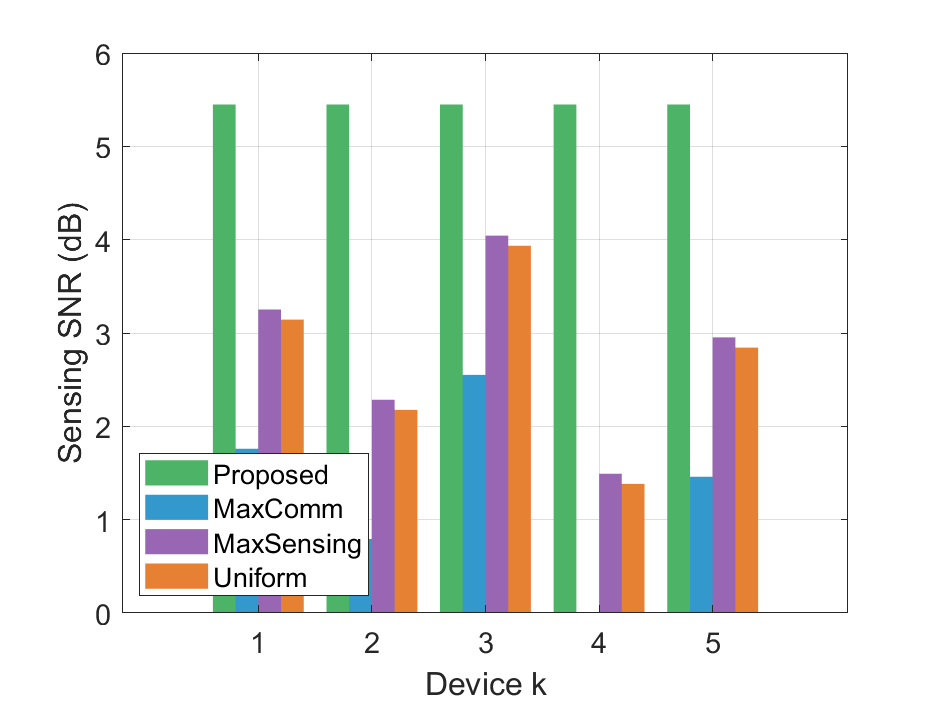}%
\label{fig_snr}}
\hfil
\subfigure[Upload success rate $q_k$.]{\includegraphics[scale=0.37]{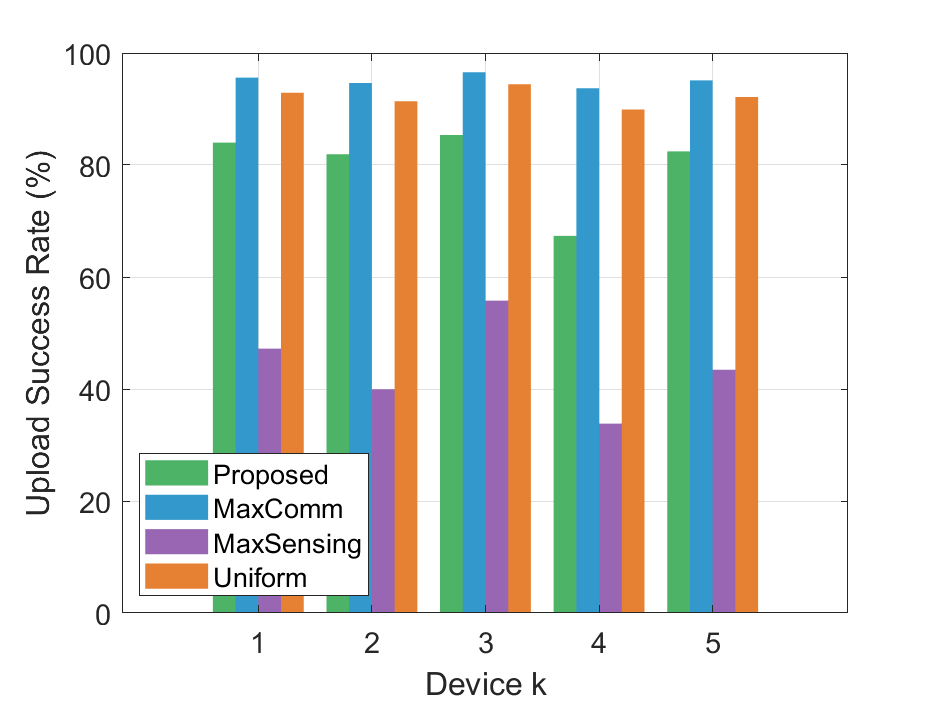}%
\label{fig_upload_rat}}
\caption{Per-device resource allocation comparison under different strategies.}
\label{fig_res_alloc}
\end{figure*}

\begin{algorithm}
\caption{Two-Layer Resource Allocation for ISAC-FL}\label{alg:alg1}
\footnotesize
\begin{algorithmic}
\STATE \textbf{Input:} System parameters $p_{\min}^s, p_{\max}^s, p_{\min}^c, p_{\max}^c, L_{\max}^s, T^s, \hat{T}^c_k, E_G$, tolerance $\varepsilon$, $\{A_k, \tau_k, E_k, \kappa_k\}_{k=1}^K$.
\STATE \textbf{Output:} Optimal resource allocation $(\mathbf{L}^{s*}, \mathbf{p}^{s*}, \mathbf{p}^{c*})$.

\STATE \textbf{Initialize:} 
\STATE $\gamma_{\text{lb}} \leftarrow p_{\min}^s \cdot \min_k\{A_k\}$, $\gamma_{\text{ub}} \leftarrow \min_k\{A_k \cdot p_{\max}^s \cdot L_{\max}^s\}$;
\STATE $\varphi \leftarrow (1 + \sqrt{5})/2$, \quad $a \leftarrow \gamma_{\text{lb}}$, \quad $b \leftarrow \gamma_{\text{ub}}$;
\WHILE{(b-a) $\geq$ $\varepsilon$:}
\STATE $\gamma_1 \leftarrow b - (b-a)/\varphi$, \quad $\gamma_2 \leftarrow a + (b-a)/\varphi$.
\STATE For each $k$, enumerate $L_k^s$ and apply \eqref{eq:optimal_ps} \eqref{eq:optimal_pc} with $\gamma_1$.
\STATE Obtain $W_1 \leftarrow \sum_k \max_{L_k^s} D_k q_k$.
\STATE For each $k$, enumerate $L_k^s$ and apply \eqref{eq:optimal_ps} \eqref{eq:optimal_pc} with $\gamma_2$.
\STATE Obtain $W_2 \leftarrow \sum_k \max_{L_k^s} D_k q_k$.
\STATE \textbf{if} $\gamma_1 \cdot W_1 < \gamma_2 \cdot W_2$: $a \leftarrow \gamma_1$; \textbf{else}: $b \leftarrow \gamma_2$.
\ENDWHILE
\STATE $\gamma^* \leftarrow (a + b) / 2$
\STATE Run \textbf{Algorithm \ref{alg:alg2}} for each $k$ with given $\gamma^*$ and obtain $(L_k^{s*}, p_k^{s*}, p_k^{c*})$.

\STATE \textbf{Return} $(\mathbf{L}^{s*}, \mathbf{p}^{s*}, \mathbf{p}^{c*})$.
\end{algorithmic}
\end{algorithm}

\begin{algorithm}
\caption{Subproblem Solution for Given $\gamma_t$}\label{alg:alg2}
\footnotesize
\begin{algorithmic}
\STATE \textbf{Input:} $\gamma_t$, device parameters $A_k, \tau_k, E_k, \kappa_k$.
\STATE \textbf{Output:} $(L_k^{s*}, p_k^{s*}, p_k^{c*})$.

\STATE \textbf{Initialize:} $W_k \leftarrow 0$.

\FOR{$L_k^s = 1$ to $L_{\max}^s$}
    \STATE Compute $D_k \leftarrow T^s / (L_k^s \cdot \tau_k)$.
    \STATE Compute $p_k^{s*}$ via \eqref{eq:optimal_ps}; \textbf{skip if} $p_k^{s*} > p_{\max}^s$.
    \STATE Compute $E_{\text{rem}} \leftarrow E_k - p_k^{s*} T^s - E_G \kappa_k D_k$; \textbf{skip if} $E_{\text{rem}} \leq 0$.
    \STATE Compute $p_k^{c*}$ via \eqref{eq:optimal_pc}.
    \STATE Update $(L_k^{s*}, p_k^{s*}, p_k^{c*})$ if $D_k q_k > W_k$.
\ENDFOR
\STATE \textbf{Return} $(L_k^{s*}, p_k^{s*}, p_k^{c*})$.
\end{algorithmic}
\end{algorithm}

\begin{table}[!ht]
\caption{Simulation Parameters\label{tab:parameters}}
\centering
\scalebox{0.9}{
\begin{tabular}{cc}
\toprule
Parameters & Values\\
\midrule
Sensing duration $T^s$ & $20$ seconds\\

Delay constraint for each round $T_{th}$ & $0.3$ seconds \\

Length of one snapshot at device $k$, $\tau_k$ & $0.0001$ seconds \\

Distance from device $k$ to the target, $d_k$ & $[10,30]$ m\\

Distance from device $k$ to the server, $d_{k,s}$ & $[20, 35]$ m\\

Antenna gain of device $k$, $G_k$ & $10$ dBi\\



Energy budget at device $k$, $E_k$ & $50$ J \\

Energy consumption per sample $\kappa_k$ & $10^{-4}$ J \\


Large-scale pathloss factor $\alpha_p$ & 4 \\

Allocated bandwidth for device $k$, $B_k$ & $1$ MHz \\

Maximal communication transmit power $p^c_\text{max}$ & 2.0 W\\

Maximal sensing transmit power $p^s_\text{max}$ & 2.0 W\\
\bottomrule
\end{tabular}}
\end{table}

\section{Numerical and Real Dataset Results}
In this section, we conduct numerical simulations to validate our
theoretical analysis and evaluate the performance of the proposed algorithms. We consider a circular area with radius $r=100$ m and $K=5$ randomly distributed ISAC devices, which sense the environment to collect training samples and participate in the federated learning framework. One BS is located at the center as the server. The other simulation parameters are summarized in Table \ref{tab:parameters}.
The number of global iterations and local iterations are $E_G=100$ and $E_L=5$.
Each point in the figures is obtained by averaging over 10 simulation runs.

\begin{figure}[!t]
\setlength{\abovecaptionskip}{0.cm}
\centering
\subfigure[Original.]{\includegraphics[width=0.75in]{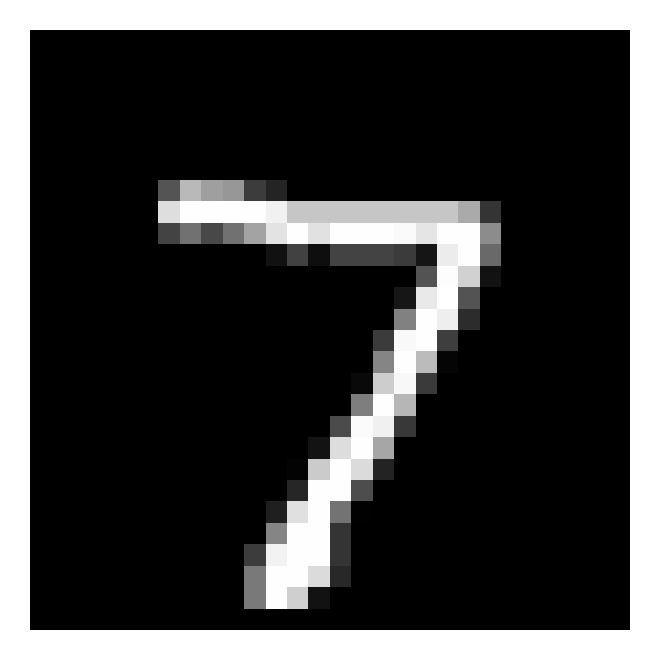}%
\label{fig_noisy_sample_original}}
\subfigure[$5$ dB.]{\includegraphics[width=0.75in]{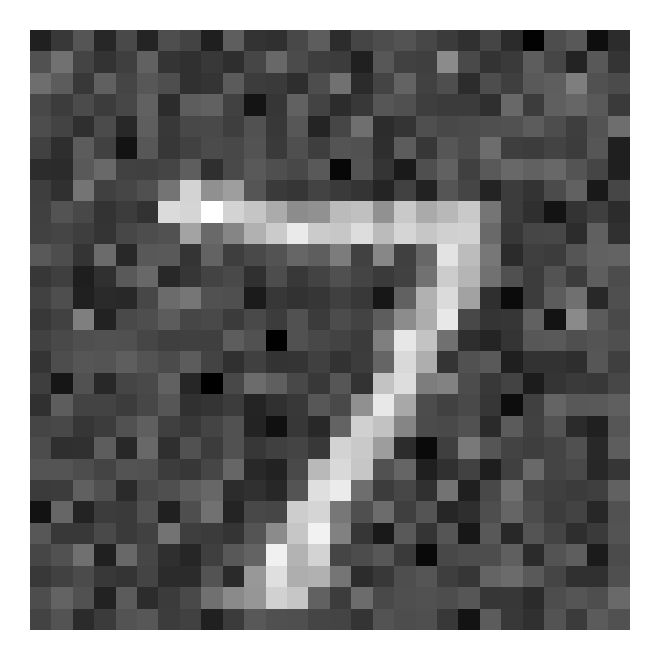}%
\label{fig_noisy_sample_5dB}}
\subfigure[$0$ dB.]{\includegraphics[width=0.75in]{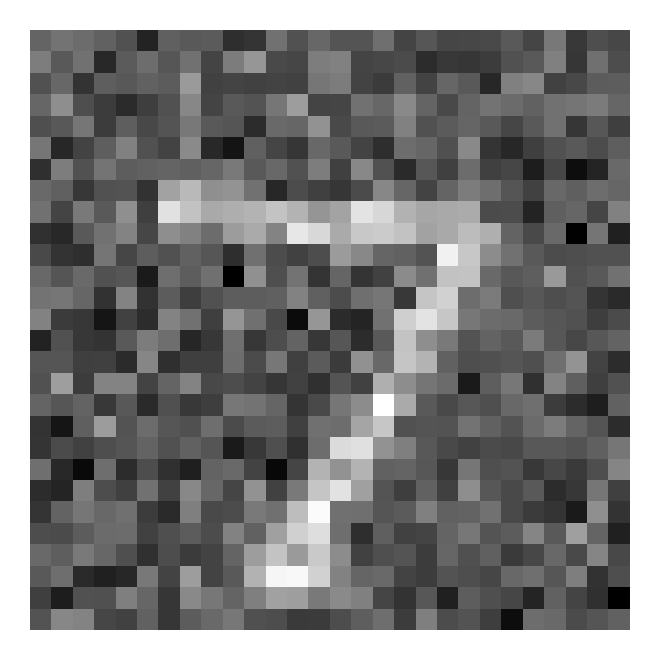}%
\label{fig_noisy_sample_0dB}}
\subfigure[$-5$ dB.]{\includegraphics[width=0.75in]{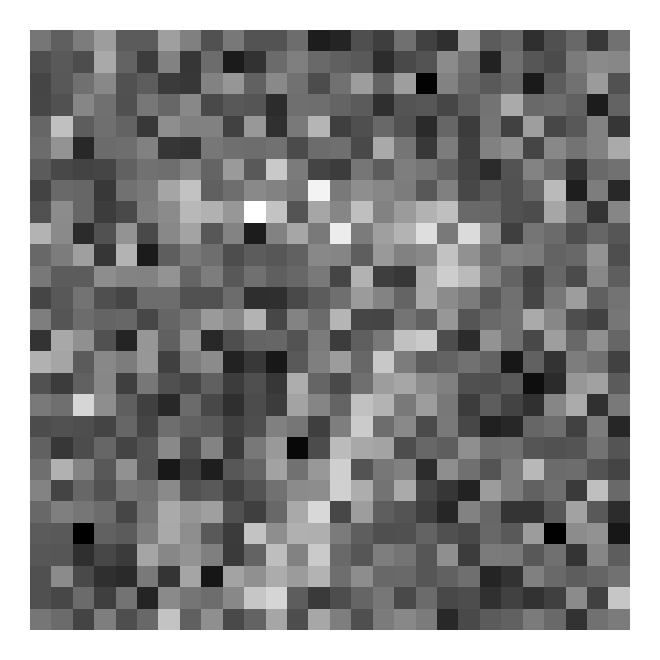}%
\label{fig_noisy_sample_neg5dB}}
\caption{Examples of MNIST digit images with AWGN noise added at different sensing SNR levels.}
\label{fig_noisy_samples}
\vspace{-12pt}
\end{figure}

\begin{figure*}[!ht]
\setlength{\abovecaptionskip}{0.cm}
\centering
\subfigure[Training loss versus upload rounds.]{\includegraphics[scale=0.38]{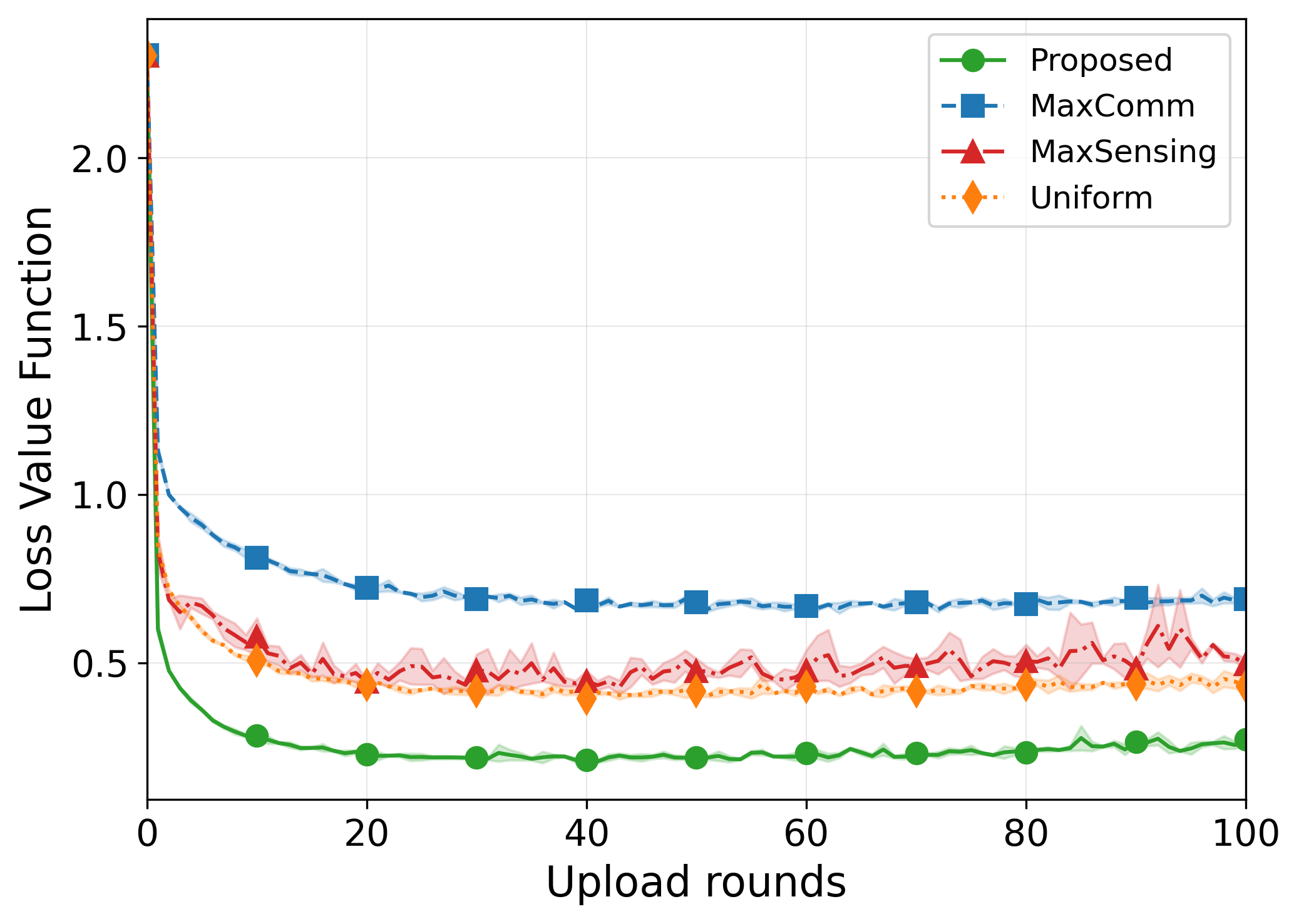}%
\label{fig_loss_iid}}
\hfil
\subfigure[Test accuracy versus upload rounds.]{\includegraphics[scale=0.38]{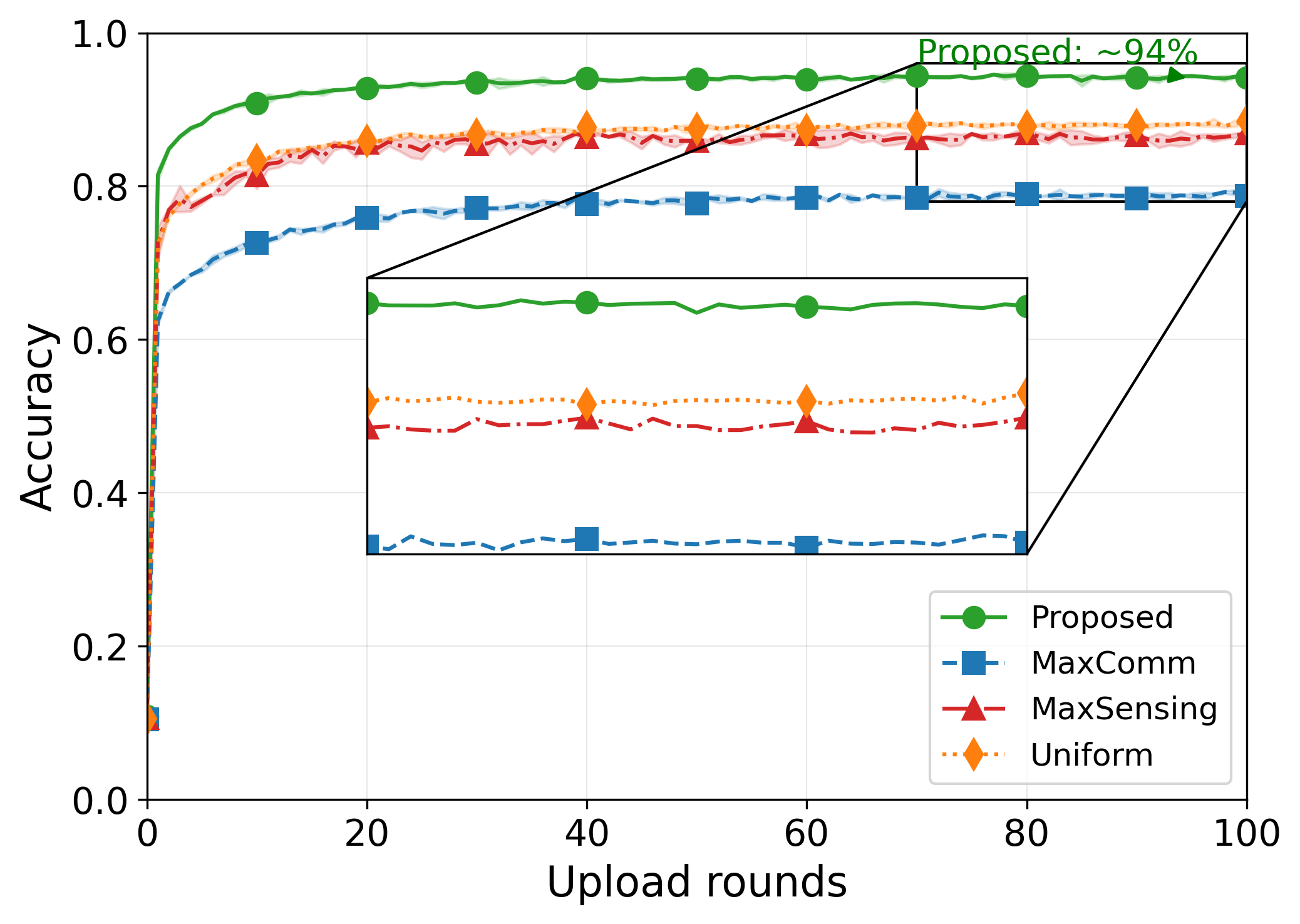}%
\label{fig_acc_iid}}
\caption{Training loss and test accuracy versus upload rounds 
under different strategies on the MNIST dataset (IID case).}
\label{fig_loss_acc_iid}
\vspace{-12pt}
\end{figure*}

\begin{figure*}[!ht]
\setlength{\abovecaptionskip}{0.cm}
\centering
\subfigure[Training loss versus upload rounds.]{\includegraphics[scale=0.38]{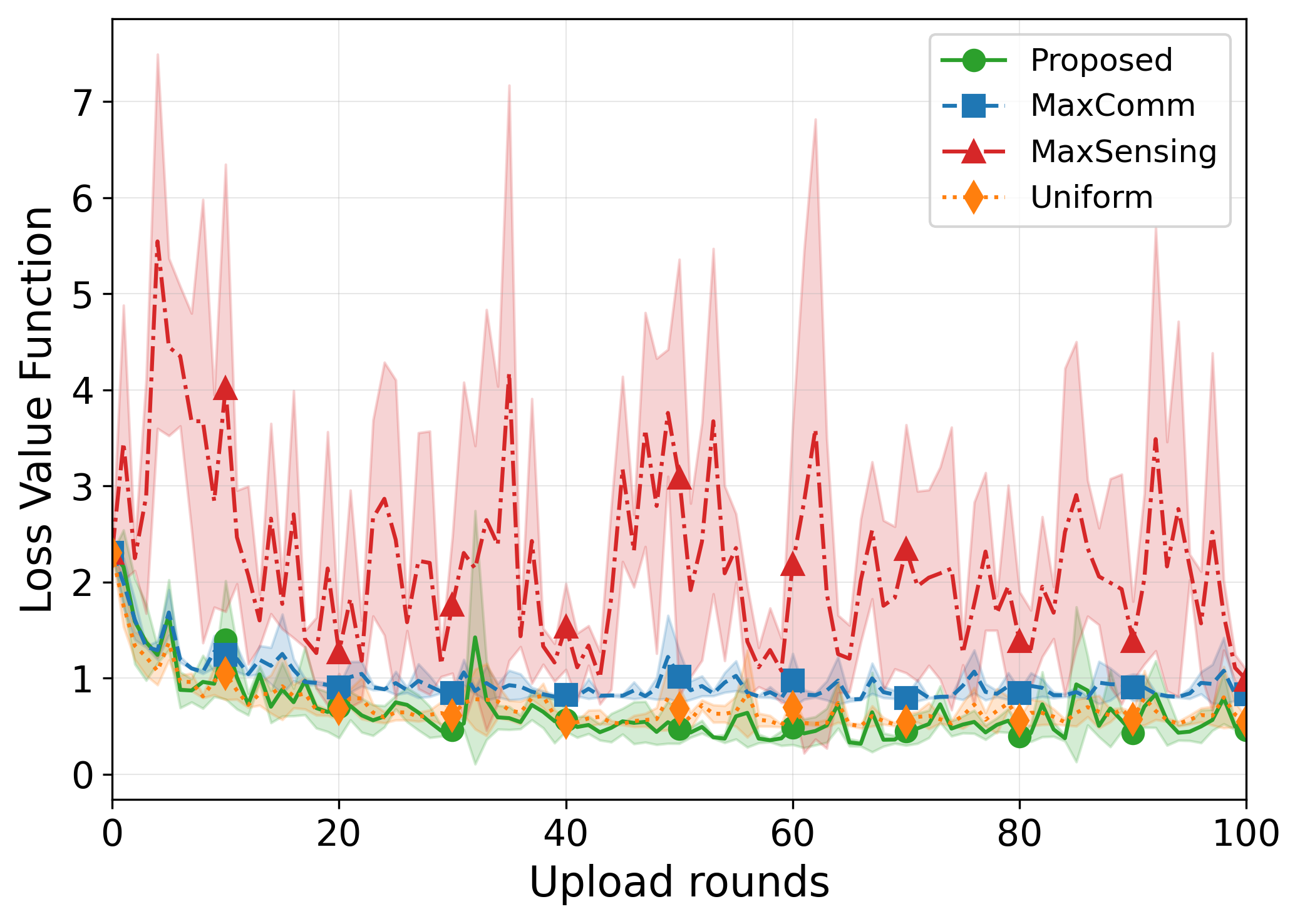}%
\label{fig_loss_niid}}
\hfil
\subfigure[Test accuracy versus upload rounds.]{\includegraphics[scale=0.38]{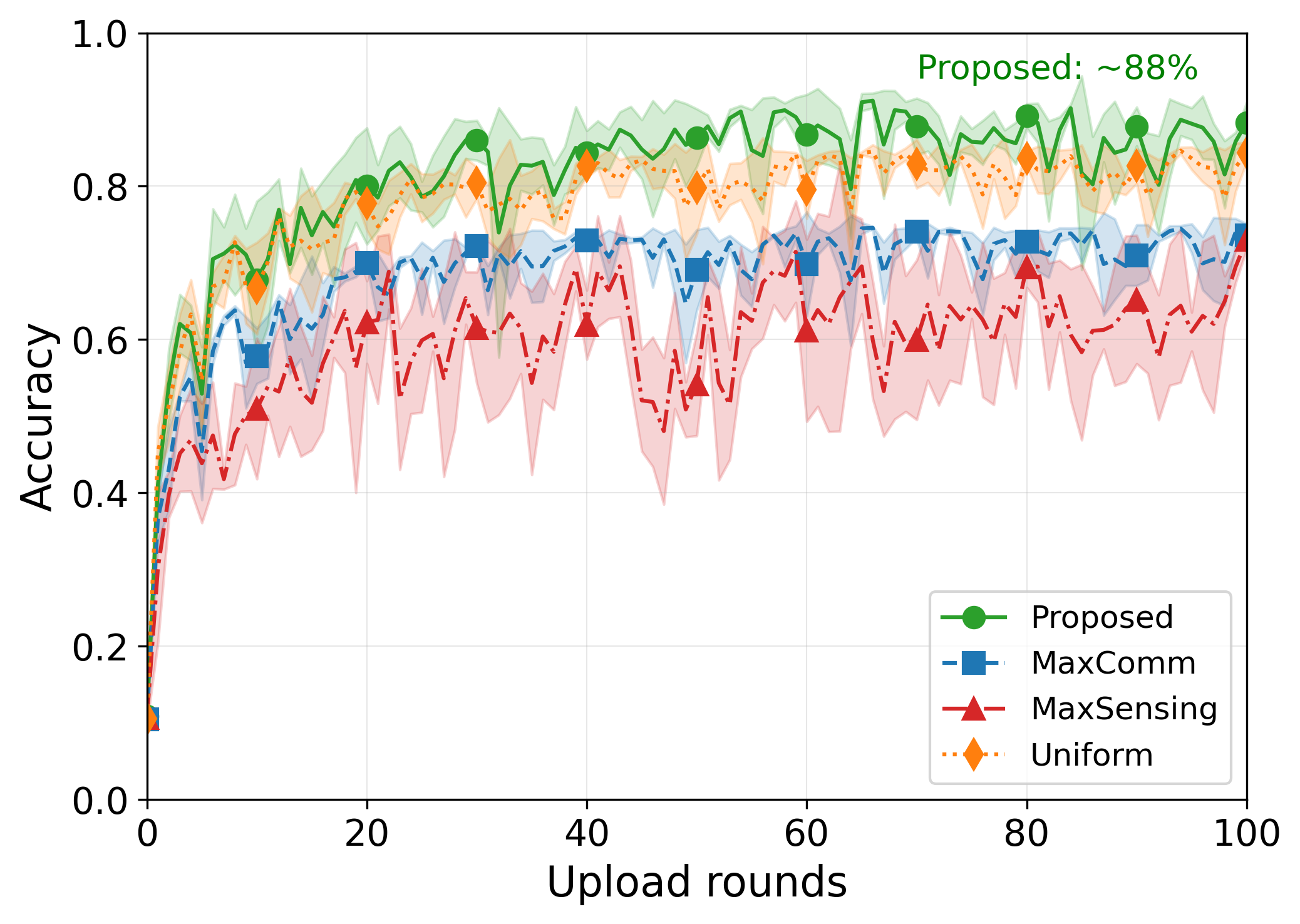}%
\label{fig_acc_niid}}
\caption{Training loss and test accuracy versus upload rounds 
under different strategies on the MNIST dataset (Non-IID case).}
\label{fig_loss_acc_niid}
\vspace{-12pt}
\end{figure*}

\begin{enumerate}
    \item \textit{Datasets}: We first adopt a comprehensive and widely used dataset MNIST \cite{lecun2002gradient}, which contains 70,000 28×28 grayscale images of handwritten digits from 0 to 9, splitting into 60,000 training and 10,000 test samples. To simulate the impact of sensing performance on training, AWGN noise is added to the local data of each device according to the corresponding sensing SNR. 
    Fig. \ref{fig_noisy_samples} shows examples of MNIST digit images corrupted by AWGN at different SNR levels. As the sensing SNR decreases, the digit features become increasingly obscured, which reflects the degraded data quality and introduces uncertainty into the local training process.
    To further evaluate the proposed ISAC-FL framework on real-world radar data, we adopt Scenario 9 of the DeepSense 6G dataset \cite{demirhan2022radar}, a large-scale multi-modal dataset that comprises coexisting multi-modal sensing and communication data, such as mmWave wireless communication, Camera, GPS data, LiDAR, and Radar, collected in realistic wireless environments. In this scenario, a vehicle equipped with a quasi-omnidirectional transmitter passes by the BS, yielding 5,964 synchronized radar measurement and beam training samples. 
    
    \item \textit{Deep Learning Model}: Devices jointly train a three-layer multilayer perceptron (MLP) with 256 neurons in each hidden layer, using LeakyReLU \cite{maas2013rectifier} as the activation function. Although the considered model is not designed to match the performance of state-of-the-art architectures, it is sufficient to clearly reveal the relative benefits of the proposed method.
    \item \textit{Baselines}: To the best of our knowledge, no existing studies have investigated the joint effect of sensing-dependent sample quantity and sample quality together with communication performance in federated learning. Consequently, no well-established baseline is available for the scenario considered in this work. For fair comparison, we consider the following baseline schemes to verify the gains brought by the joint optimization:
    \begin{itemize}
        \item \textbf{MaxSensing}: Each device uses the maximal sensing transmit power $p^s_k=p^s_\text{max}$ for collecting data.
        \item \textbf{MaxComm}: Each device uses the maximal communication transmit power $p^c_k=p^c_\text{max}$ for uploading.
        \item \textbf{Uniform}: Each device equally splits the available energy between sensing and communication, i.e., $p^s_kT^s=0.5(E_k-E_k^{train}).$
    \end{itemize}
\end{enumerate}

Fig. \ref{fig_res_alloc} illustrates the per-device resource allocation under the four strategies in terms of dataset size $D_k$, sensing SNR $\gamma^s_k$, and upload success rate $q_k$. MaxSensing strategy has the largest number of samples but at the cost of low upload reliability, while MaxComm achieves the highest upload success rate of approximately $95\%$ but yields the lowest sensing SNR close to $0$ dB. The proposed strategy maintains the highest sensing SNR at approximately $5.5$ dB across all devices while keeping $q_k$ at a moderate level of $68\%$ to $86\%$, with dataset sizes that vary per device according to their individual channel conditions and energy budgets. These results confirm that the proposed method achieves a balanced allocation that jointly benefits data quality, dataset size, and upload reliability.

Fig. \ref{fig_loss_acc_iid} shows the variation of training loss and test accuracy under IID data distribution. The proposed strategy achieves the highest test accuracy of approximately $94\%$ and outperforms all the baselines. This gain stems from the joint optimization of sensing and communication resources, which simultaneously ensures high data quality and reliable aggregation. MaxComm strategy yields the worst performance with a final accuracy of around $79\%$, indicating that prioritizing communication power at the expense of sensing quality severely degrades the training data, which cannot be compensated by high upload reliability. MaxSensing and Uniform strategies achieve comparable performance around $87\%$, the former exhibits noticeably larger variance in both loss and accuracy throughout training, reflecting the instability introduced by the unreliable aggregation. In Fig. \ref{fig_loss_acc_niid}, we also present the results under Non-IID data distribution, where all strategies experience a notable performance degradation compared to the IID case due to increased difficulty of learning when local data distributions are heterogeneous. The proposed strategy remains the optimal, achieving a final accuracy of approximately $87\%$, which demonstrates that the joint optimization is robust to data heterogeneity. Notably, MaxComm is no longer the worst in this case, since each device covers only a portion of the global distribution and reliable aggregation becomes crucial. Owing to its high upload success probability, MaxComm can still aggregate diverse local updates and achieve stable accuracy. By contrast, MaxSensing strategy emerges as the worst one, as missing updates from even a few devices introduce significant bias into the global model, and the low communication power leads to frequent upload failures that disrupt the aggregation of complementary local information. Uniform allocation achieves performance close to the proposed strategy in this setting, which suggests that a balanced resource allocation provides a reasonable heuristic when data is heterogeneous.

Fig. \ref{fig_energy_scan} illustrates the impact of communication energy ratio on the learning performance. As the communication energy ratio increases from $10\%$ to $40\%$, both the accuracy of IID and Non-IID improve along with the rapid increase in the probability of upload success. Notably, the Non-IID accuracy benefits more significantly from this improvement, since each device's local data is not representative of the global distribution, making successful aggregation of complementary updates from diverse devices particularly critical. However, both accuracy curves decline when the upload success probability approaches saturation. This indicates that once upload reliability is sufficiently high, further increasing communication power provides only limited additional benefit, while the resulting reduction in sensing power causes the sensing SNR to drop below $0$ dB. The degraded sensing quality introduces excessive noise into the local training data, and this adverse effect on data quality can no longer be compensated by the marginal improvement in upload reliability.
These results confirm the existence of an optimal energy allocation, which highlights the fundamental trade-off in the proposed ISAC-FL framework.

Fig. \ref{fig_scalability} evaluates the scalability of the four strategies as the number of devices $K$ increases from $3$ to $15$. All strategies benefit from a larger $K$ since more devices collectively contribute a greater volume of sensing data. The proposed strategy consistently achieves the highest accuracy across all values of $K$ while maintaining a substantial performance advantage over the baselines, thereby confirming the effectiveness of the joint optimization as the network size increases. Among the baselines, MaxSensing strategy exhibits the fastest performance improvement with increasing $K$. This is because MaxSensing strategy allocates more power to sensing, resulting in low upload success probability for each individual device; however, the increased number of participating devices statistically compensates for the per-device upload failures, partially recovering the aggregation quality. By contrast, MaxComm strategy shows a more moderate improvement, as its bottleneck lies in data quality rather than upload reliability. These observations confirm that the proposed strategy achieves superior scalability by jointly optimizing both sensing and communication resources.

\begin{figure}[!t]
\setlength{\abovecaptionskip}{-0.1cm}
\centering
\includegraphics[scale=0.38]{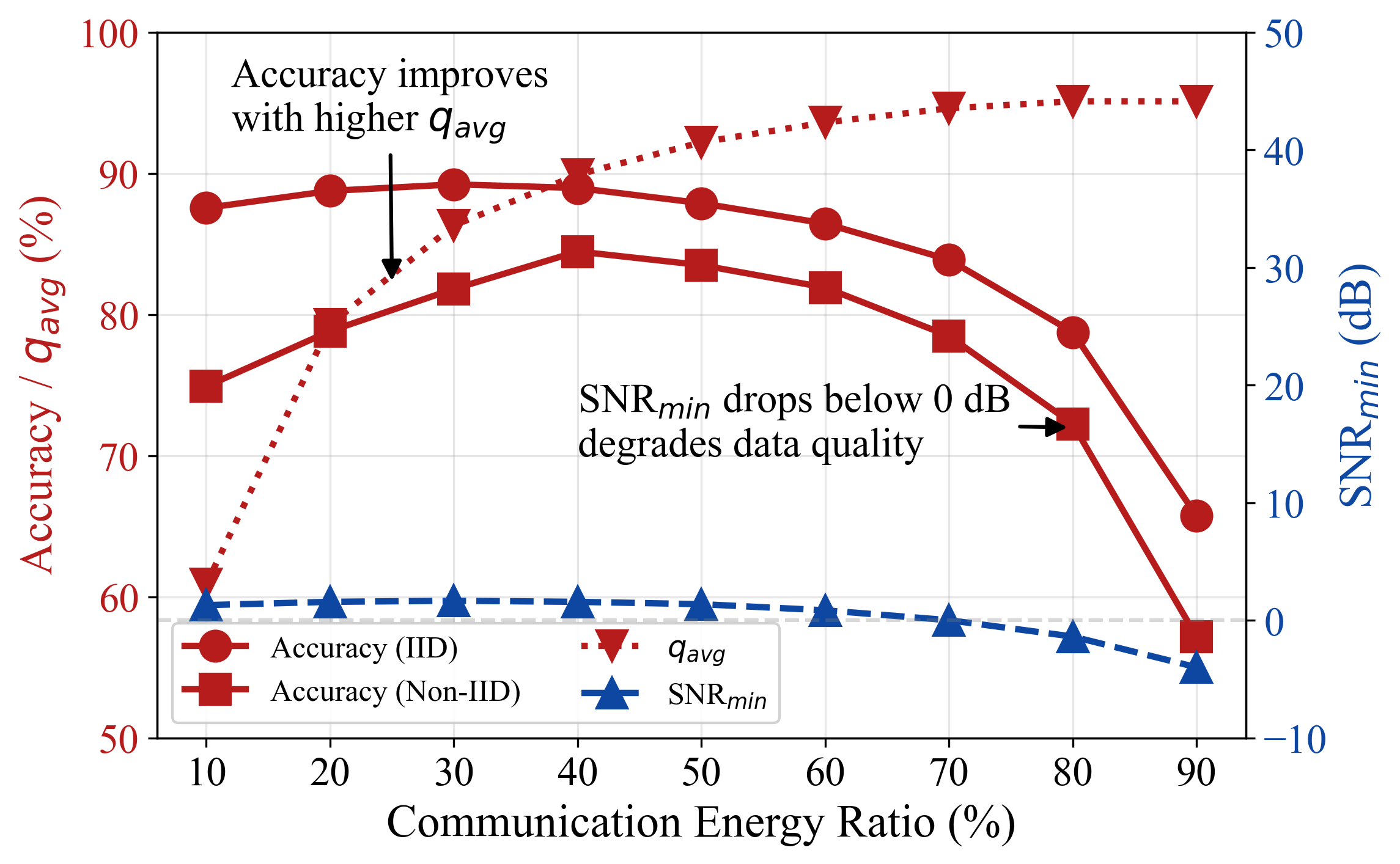}
\caption{The ISAC trade-off: test accuracy, average upload success probability $q_{\text{avg}}=\frac{1}{K}\sum_k q_k$ and minimum sensing SNR $\gamma^s_{\min}=\text{min}_k \gamma^s_k$ versus communication energy ratio under IID and Non-IID data distributions.}
\label{fig_energy_scan}
\end{figure}

\begin{figure}[!t]
\setlength{\abovecaptionskip}{-0.1cm}
\centering
\includegraphics[scale=0.36]{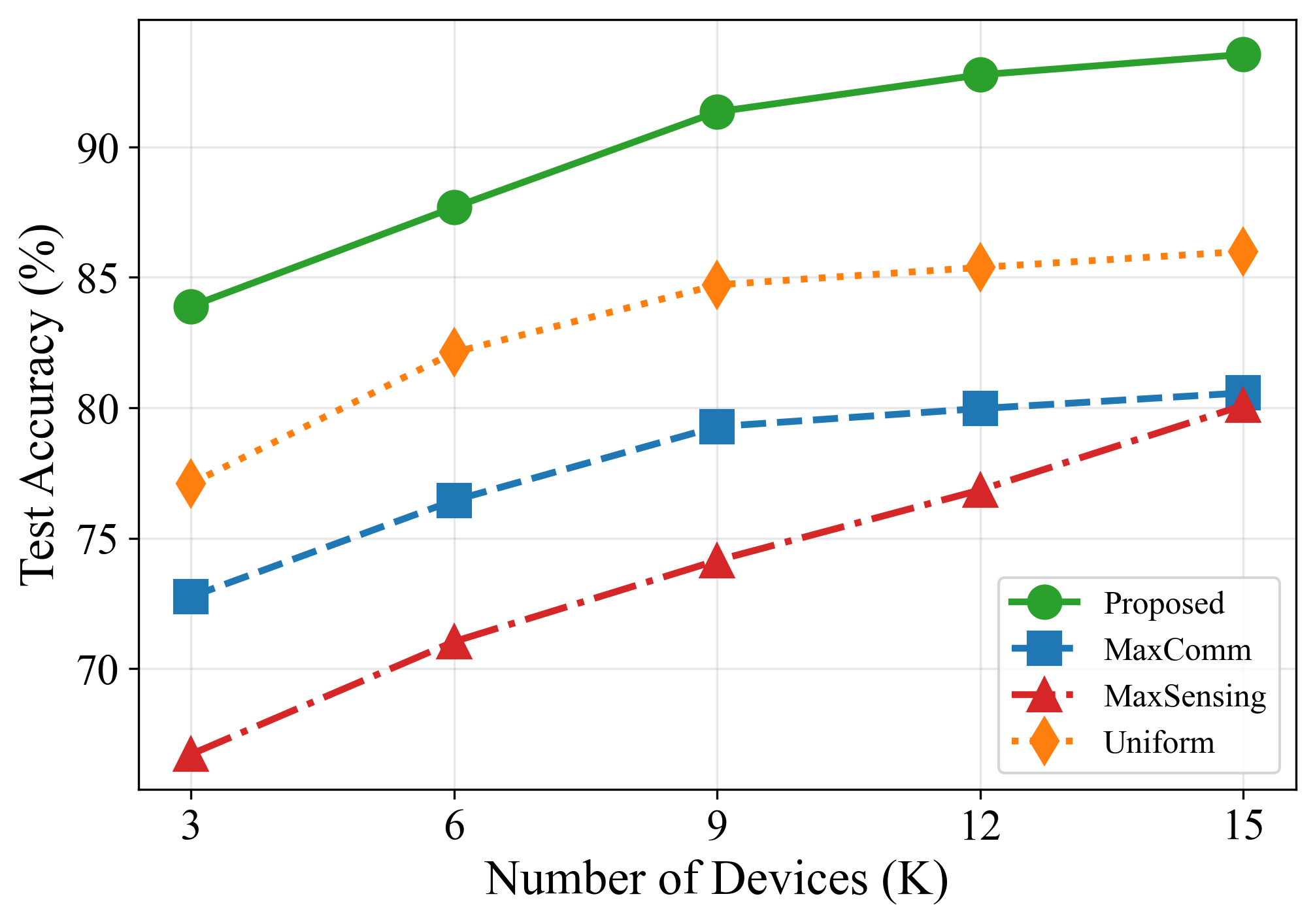}
\caption{Test accuracy versus number of devices $K$ under four resource 
allocation strategies.}
\label{fig_scalability}
\end{figure}

\begin{figure*}[!ht]
\setlength{\abovecaptionskip}{0.cm}
\centering
\subfigure[Training loss versus upload rounds.]{\includegraphics[scale=0.38]{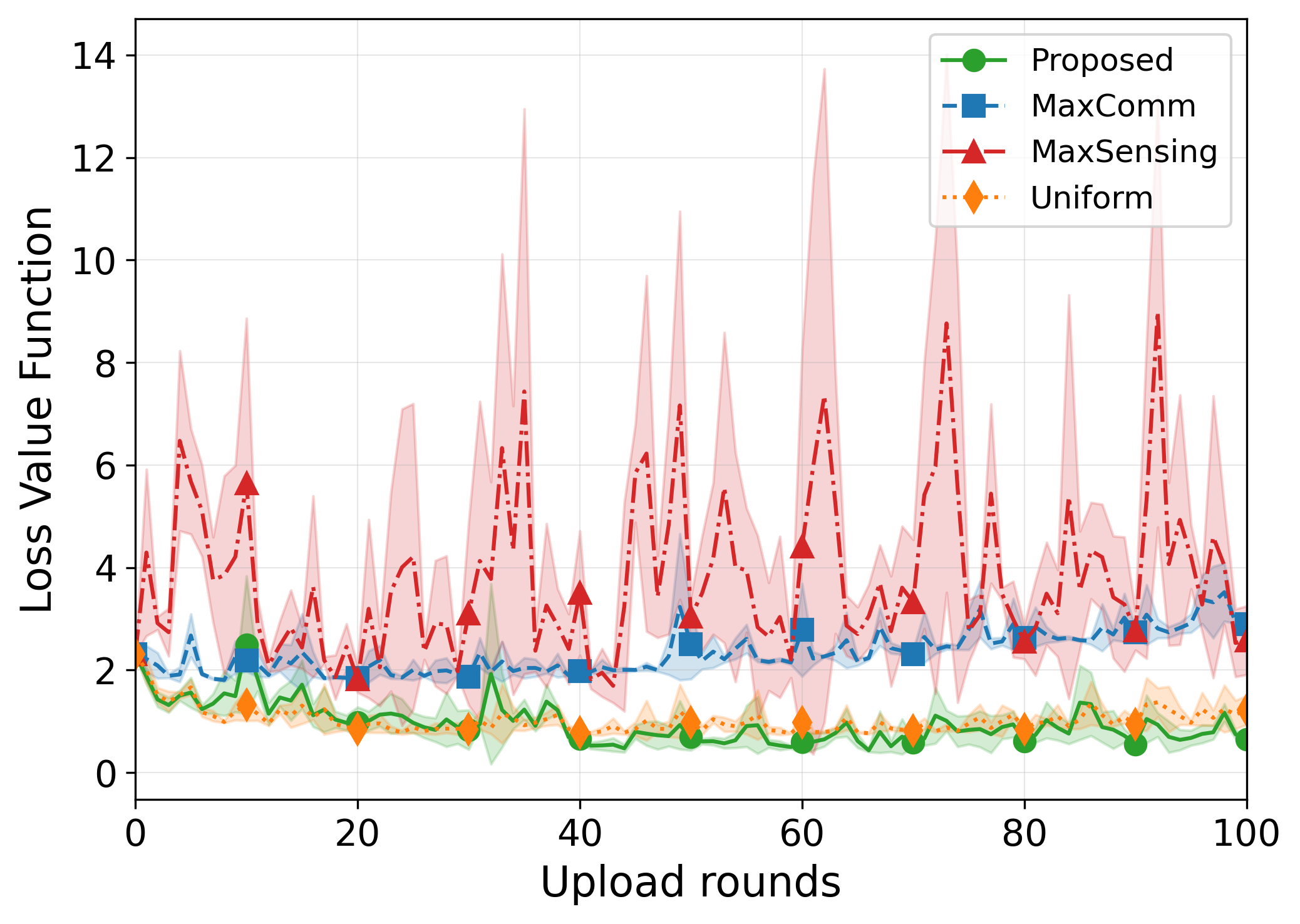}%
\label{fig_loss_strag}}
\hfil
\subfigure[Test accuracy versus upload rounds.]{\includegraphics[scale=0.38]{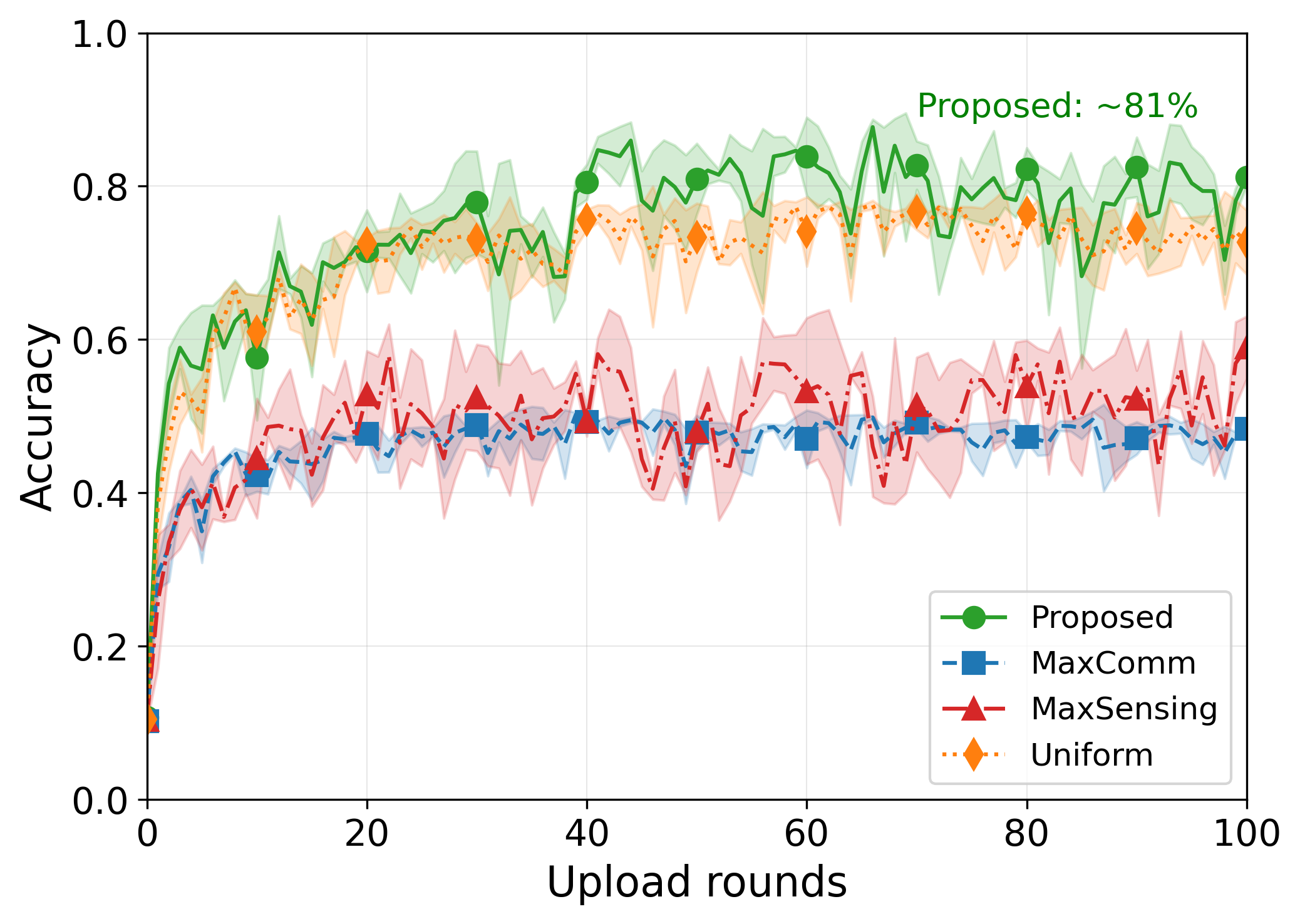}%
\label{fig_acc_strag}}
\caption{Training loss and test accuracy versus upload rounds 
under different strategies on the MNIST dataset 
(Non-IID case with one straggler device).}
\label{fig_loss_acc_strag}
\vspace{-12pt}
\end{figure*}

Fig. \ref{fig_loss_acc_strag} presents the results under the Non-IID setting in the presence of a straggler device with a significantly reduced energy budget, which is a practical challenge in FL systems \cite{reisizadeh2022straggler, pei2024review}. Compared to the homogeneous case, all strategies experience further performance degradation. The proposed strategy remains optimal, and demonstrates greater robustness to the straggler than the baselines. 
Notably, MaxComm strategy suffers a more severe degradation compared to the homogeneous case, with accuracy dropping to around $50\%$. 
This is because forcing the straggler to allocate the maximum power to communication leaves it with a very limited local dataset of poor quality, which prevents the global model from capturing an important portion of the overall data distribution and consequently degrades the aggregation quality.

Moreover, we extend the DL-based beam prediction framework of \cite{demirhan2022radar} to a FL setting. Following the preprocessing pipeline in \cite{demirhan2022radar}, we apply the radar cube transformation to the raw radar measurements, and adopt the same LeNet-based CNN architecture as the local model for each device. The training performance is evaluated using the cumulative top-$k$ accuracy for $k=1,2,3$, which measures the probability that the optimal beam index is contained within the top-$k$ predicted candidates. Fig. \ref{fig_topk} presents the comparison of training performance under different strategies on the DeepSense 6G dataset. The proposed strategy consistently achieves the highest accuracy from top-$1$ through top-$3$, with values of $43.6\%$, $63.4\%$ and $76.2\%$ respectively.
These results demonstrate that the joint optimization of dataset size, sensing SNR, and upload success probability yields a more discriminative model, enabling more accurate beam prediction. The consistent superiority of the proposed strategy validates the effectiveness of the joint optimization framework in real-world radar-aided communication scenarios.

\begin{figure}[!t]
\setlength{\abovecaptionskip}{-0.1cm}
\centering
\includegraphics[scale=0.35]{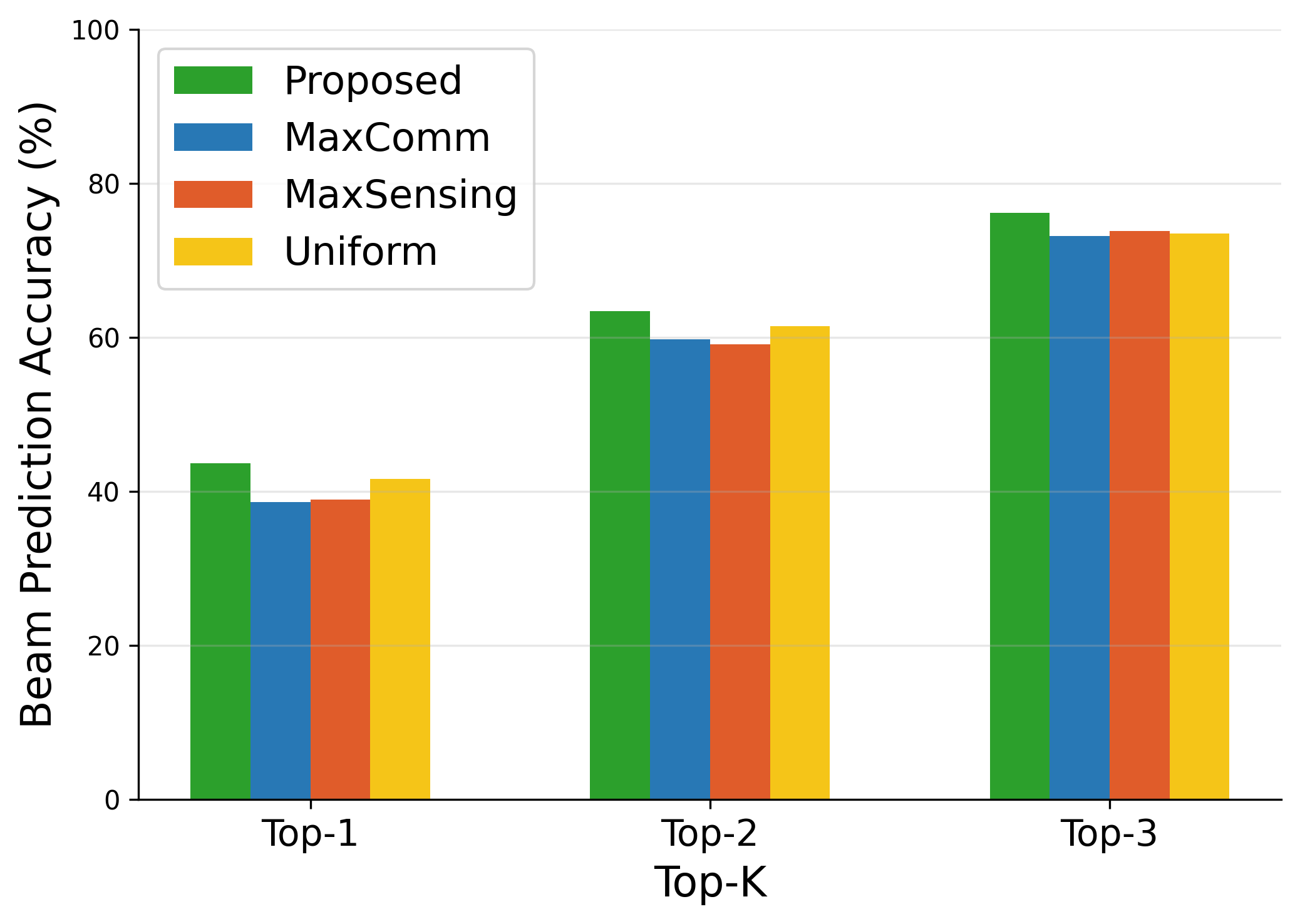}
\caption{Top-$k$ ($k=1,2,3$) beam prediction accuracy under four different strategies evaluated on the DeepSense 6G Scenario 9 dataset.}
\label{fig_topk}
\vspace{-12pt}
\end{figure}

\section{Conclusion}
In this paper, we have investigated the resource allocation problem for federated learning in ISAC systems, where devices sense the environment to collect training data and upload model updates to a server. 
We analytically show that the FL optimality gap depends explicitly on the sensing SNR, the local dataset sizes, and the upload success probabilities. Building on this result, we derive a closed-form upper bound on convergence performance that characterizes the combined effect of these factors.
Based on this bound, the original intractable optimization problem was reformulated into a tractable resource allocation problem, and an efficient algorithm has been developed with a complexity that scales linearly with the number of devices and the maximum snapshot number. Numerical results on both the MNIST dataset and the real-world DeepSense 6G radar dataset demonstrated that the proposed joint optimization strategy consistently outperforms baseline schemes under both IID and Non-IID data distributions, and confirmed the existence of an optimal energy allocation that balances sensing quality against upload reliability. 

\section*{Appendix A: \textsc{Notations}}
For simplicity, some shorthand notations are defined as
follows.
$\mathbf{w}^k_t$: the local model of device $k$ at round $t$.

$\tilde{g}_k(\mathbf{w}^k_t;\xi)$: the stochastic gradient computed at device $k$ on a random sample $\xi$.

$g_k(\mathbf{w}^k_t)=E_{\xi}[\tilde{g_k}(\mathbf{w}^k_t;\xi)]=\nabla F_k(\mathbf{w}^k_t) $: the expected gradient at device $k$ at round $t$.

Aggregation weight:
\begin{small}
\begin{equation}
    E[\rho^{(t)}_k]=\bar{\rho}_k \approx \frac{E[D_kC_k^{(t)}]}{E[W_t]}= \frac{D_k q_k}{\sum_kD_k q_k},
\end{equation}
\end{small}where we adopt the approximation that $E[D_kC_k^{(t)}/W_t]\approx E[D_kC_k^{(t)}]/E[W_t]$. The aggregation update is skipped for rounds with $W_t=0$.

Aggregated stochastic gradient is given by:
\begin{equation}
    \tilde{g}_t= \sum_{k\in S_t}\rho^{(t)}_k\tilde{g}_k(\mathbf{w}^k_t;\xi).
\end{equation}

Expected aggregated gradient is given by:
\begin{equation}
\label{eq:no1}
    g_t = E[\tilde{g}_t]\approx\nabla F(\mathbf{w}_t).
\end{equation}
Here, the aggregated stochastic gradient $E[\tilde{g}_t]$ is approximately unbiased by treating the random aggregation weights by their average behavior

Global model update is given by:
\begin{equation}
    \mathbf{w}_{t+1} = \mathbf{w}_t - \eta \tilde{g}_t.
\end{equation}

\section*{Appendix B: \textsc{Proof of Proposition 1}}
Let $\tilde{\xi}=\xi + n$ denote the sensing sample corrupted by noise, where $n \sim \mathcal{N}(0, \sigma^2_\varepsilon \mathbf{I})$ is the AWGN introduced during the sensing process.
We additionally assume that the loss function gradient is $L_x$-Lipschitz continuous with respect to the input $\tilde{\xi}$, i.e., $\parallel \nabla F_k(\mathbf{w}^t_k;\tilde{\xi})-\nabla F_k(\mathbf{w}^t_k;\xi) \parallel \leq L_x \parallel \tilde{\xi}-\xi \parallel$, which holds for smooth neural network architectures with bounded weights. 
The stochastic gradient computed on the noisy sample can be decomposed as
\begin{small}
\begin{equation}
    \begin{aligned}
        \tilde{g}_k(\mathbf{w}^t_k;\tilde{\xi})-g_k(\mathbf{w}^t_k)&=\Big[\tilde{g}_k(\mathbf{w}^t_k;\tilde{\xi})-\nabla F_k(\mathbf{w}^t_k;\xi) \Big] \\
        &+ \Big[\nabla F_k(\mathbf{w}^t_k;\xi)-g_k(\mathbf{w}^t_k) \Big],
    \end{aligned}
\end{equation}
\end{small}where the first term on the right-hand side is the bias induced by the sensing noise and the second term denotes the SGD stochastic variance.
Applying the triangle inequality, the gradient noise variance satisfies
\begin{small}
\begin{equation}
    \begin{aligned}
        E[\parallel \tilde{g}_k-g_k \parallel^2] &\leq 2E[\parallel \tilde{g}_k(\mathbf{w}^t_k;\tilde{\xi})-\nabla F_k(\mathbf{w}^t_k;\xi) \parallel^2] \\
        &+ 2E[\parallel \nabla F_k(\mathbf{w}^t_k;\xi)-g_k(\mathbf{w}^t_k) \parallel^2].
    \end{aligned}
\end{equation}
\end{small}For the first term, invoking the $L_x$-Lipschitz continuity of the gradient with respect to the input yields
\begin{small}
\begin{equation}
    \begin{aligned}
        E[\parallel \tilde{g}_k(\mathbf{w}^t_k;\tilde{\xi})-\nabla F_k(\mathbf{w}^t_k;\xi) \parallel^2] &\leq \frac{L_x^2E[\parallel n \parallel^2]}{D_k} \\
        &= \frac{L_x^2E[\parallel \xi \parallel^2]}{D_k\gamma^s_k},
    \end{aligned}
\end{equation}
\end{small}where we used the SNR definition $\gamma^s_k=E[\parallel \xi \parallel^2]/E[\parallel n \parallel^2]$ and $\tilde{g}_k=\frac{1}{D_k}\sum_{i=1}^{D_k}\nabla F_k(\mathbf{w}^t_k;\tilde{\xi}_i)$. For the second term, the standard bounded variance assumption for mini-batch SGD over $D_k$ samples gives \cite{li2019convergence}
\begin{small}
\begin{equation}
    \begin{aligned}
        E[\parallel \nabla F_k(\mathbf{w}^t_k;\xi)-g_k(\mathbf{w}^t_k) \parallel^2] \leq \frac{c_0}{D_k},
    \end{aligned}
\end{equation}
\end{small}
Combining both terms, we obtain
\begin{small}
\begin{equation}
    \begin{aligned}
        E[\parallel \tilde{g}_k-g_k \parallel^2] \leq \frac{2L_x^2 E[\parallel \xi \parallel^2]}{D_k\gamma^s_k} + \frac{2c_0}{D_k}.
    \end{aligned}
\end{equation}
\end{small}
Since the system constraints \eqref{eq:cost1} and \eqref{eq:cost3} impose $\gamma^s_k \leq \gamma^s_\text{max}$, where $\gamma^s_\text{max}$ denotes the maximum achievable sensing SNR, the second term can be further bounded as $\frac{2c_0}{D_k} \leq \frac{2c_0 \gamma^s_\text{max}}{D_k \gamma^s_k}$. Letting $\sigma^2_0=2 L_x^2 E[\parallel \xi \parallel^2] + 2c_0 \gamma^s_\text{max} $, we have
\begin{small}
\begin{equation}
    E[\parallel \tilde{g}_k-g_k \parallel^2] \leq \frac{\sigma^2_0}{D_k \gamma^s_k} \triangleq \frac{\sigma^2_k(\gamma^s_k)}{D_k},
\end{equation}
\end{small}where $\sigma^2_k(\gamma^s_k)=\sigma^2_0 / \gamma^s_k$. This result reveals that the gradient noise upper bound is inversely proportional to both the sensing SNR $\gamma^s_k$ and the local dataset size $D_k$.

\section*{Appendix C: \textsc{Proof of Lemma \ref{lm:gradient_noise}}}
From the definition and Assumption 3, it follows that
\begin{small}
\begin{equation}
\label{eq:lm_eq1}
    \begin{aligned}
        E[\parallel \varepsilon_t \parallel^2] &=E[\parallel \tilde{g}_t - g_t \parallel^2]\\
        &= E\Big[\parallel \sum^K_{k=1}\frac{D_kC^{(t)}_k}{W_t}( \tilde{g}_k(\mathbf{w}^k_t;\xi) - \nabla F_k(\mathbf{w}^k_t)) \parallel^2 \Big]. \\
    \end{aligned}
\end{equation}
\end{small}
Since the noise across different devices is independent and the noise-induced gradient bias has zero mean conditioned on $\mathbf{w}^k_t$, when $k \neq j$, we have
\begin{small}
\begin{equation}
\label{eq:lm_eq2}
\begin{aligned}
    E[(\tilde{g}_k-\nabla F_k)^T (\tilde{g}_j-\nabla F_j)]=E[\tilde{g}_k-\nabla F_k]^TE[\tilde{g}_j-\nabla F_j]
    =0,
\end{aligned}
\end{equation}
\end{small}
Hence, \eqref{eq:lm_eq1} becomes
\begin{small}
\begin{equation}
\label{eq:lm_eq3}
    \begin{aligned}
        E[\parallel \varepsilon_t \parallel^2] &= E\Big[\parallel \sum^K_{k=1}\frac{D_kC^{(t)}_k}{W_t}( \tilde{g}_k(\mathbf{w}^k_t;\xi) - \nabla F_k(\mathbf{w}^k_t) \parallel^2 \Big]\\
        &=E\Big[\sum^K_{k=1}\Big(\frac{D_kC^{(t)}_k}{W_t}\Big)^2\parallel \tilde{g}_k(\mathbf{w}^k_t;\xi) - \nabla F_k(\mathbf{w}^k_t) \parallel^2 \Big]\\
        &=E\Big[\sum^K_{k=1}\frac{D_k^2C^{(t)}_k}{W_t^2}\parallel \tilde{g}_k(\mathbf{w}^k_t;\xi) - \nabla F_k(\mathbf{w}^k_t) \parallel^2 \Big] \\
        &\overset{a}{\leq} E\Big[\sum^K_{k=1}\rho^{(t)}_k\parallel \tilde{g}_k(\mathbf{w}^k_t;\xi) - \nabla F_k(\mathbf{w}^k_t) \parallel^2 \Big]
    \end{aligned}
\end{equation}
\end{small}
where (a) stems from
\begin{small}
\begin{equation}
\label{eq:lm_eq4}
    \Big(\frac{D_kC^{(t)}_k}{W_t}\Big)^2=\frac{D_k^2C^{(t)}_k}{W_t^2}\leq \frac{D_kC^{(t)}_k}{W_t}\triangleq\rho^{(t)}_k.
\end{equation}
\end{small}
Let us assume the sensing SNR satisfies $\gamma^s_k\geq\gamma^s_{\text{min}}> 0$, under Assumption 3, we have
\begin{small}
\begin{equation}
\label{eq:lm_eq5}
    E\Big[\parallel \tilde{g}_k(\mathbf{w}^k_t;\xi) - \nabla F_k(\mathbf{w}^k_t) \parallel^2 \Big]= \frac{\sigma^2_0}{D_k\gamma^s_k}\leq \frac{\sigma^2_0}{D_k\gamma^s_{\text{min}}},
\end{equation}
\end{small}
then by substituting \eqref{eq:lm_eq5} into \eqref{eq:lm_eq3}, and applying $\sum_k q_k \leq K$ as $q_k \leq 1$ always holds, we can obtain
\begin{small}
\begin{equation}
\label{eq:lm_eq6}
    \begin{aligned}
        E[\parallel \varepsilon_t \parallel^2] &\leq \frac{\sigma^2_0}{\gamma^s_{\text{min}}}\sum_{k=1}^K \frac{E\Big[\rho^{(t)}_k \Big]}{D_k}\\
        &\approx \frac{\sigma^2_0}{\gamma^s_{\text{min}}}\sum_{k=1}^K \frac{1}{D_k}\frac{D_kq_k}{\sum_k D_kq_k}\\
        &=\frac{\sigma^2_0}{\gamma^s_{\text{min}}} \frac{\sum_kq_k}{\sum_k D_kq_k} \\
        &\leq \frac{\sigma^2_0}{\gamma^s_{\text{min}}} \frac{K}{\sum_k D_kq_k},
    \end{aligned}
\end{equation}which completes the proof.
\end{small}

\section*{Appendix D: \textsc{Proof of Theorem \ref{theo:convergence}}}

From Assumption 2 and $\mathbf{w}_{t+1}=\mathbf{w}_t-\eta \tilde{g}_t$, we have
\begin{small}
\begin{equation}
\label{eq:th_eq1}
\begin{aligned}
    F(\mathbf{w}_{t+1}) &\leq F(\mathbf{w}_t) + \langle\nabla F(\mathbf{w}_t),\mathbf{w}_{t+1}-\mathbf{w}_t \rangle \\
    &+ \frac{L}{2}\parallel\mathbf{w}_{t+1}-\mathbf{w}_t\parallel^2\\
    &=F(\mathbf{w}_t)-\eta \langle \nabla F(\mathbf{w}_t), \tilde{g}_t \rangle + \frac{L\eta^2}{2}\parallel \tilde{g}_t \parallel^2.
\end{aligned}
\end{equation}
\end{small}

Under the approximation in \eqref{eq:no1}, take expectations on both sides of \eqref{eq:th_eq1} gives
\begin{small}
\begin{equation}
\label{eq:th_eq2}
\begin{aligned}
    E[F(\mathbf{w}_{t+1})] &\leq E[F(\mathbf{w}_t)]-\eta E[\langle \nabla F(\mathbf{w}_t), \tilde{g}_t \rangle]\\ 
    &+ \frac{L\eta^2}{2} E[\parallel \tilde{g}_t \parallel^2]\\
    &\overset{b}{=}E[F(\mathbf{w}_t)]-\eta E[\parallel \nabla F(\mathbf{w}_t) \parallel^2]\\
    &+ \frac{L\eta^2}{2} E[\parallel \tilde{g}_t \parallel^2].
\end{aligned}
\end{equation}
\end{small}
where $(b)$ stems from $E[\langle \nabla F(\mathbf{w}_t), \tilde{g}_t \rangle]=E[\parallel \nabla F(\mathbf{w}_t) \parallel^2]$.

Let $\tilde{g}_t=\nabla F(\mathbf{w}_t)+(\tilde{g}_t-\nabla F( \mathbf{w}_t))$, and based on the triangle inequality, it follows that
\begin{small}
\begin{equation}
\label{eq:th_eq3}
    \begin{aligned}
        E[\parallel \tilde{g}_t \parallel^2] &=  E[\parallel \nabla F(\mathbf{w}_t)+ (\tilde{g}_t - \nabla F(\mathbf{w}_t)) \parallel^2]\\
        &\leq 2E[\parallel \nabla F(\mathbf{w}_t) \parallel^2] + 2E[\parallel \tilde{g}_t - \nabla F(\mathbf{w}_t) \parallel^2].
    \end{aligned}
\end{equation}
\end{small}

Substituting \eqref{eq:th_eq3} into \eqref{eq:th_eq2}, we have
\begin{small}
\begin{equation}
\label{eq:th_eq4}
    \begin{aligned}
        E[F(\mathbf{w}_{t+1})] &\leq E[F(\mathbf{w}_t)]+(L\eta^2-\eta) E[\parallel \nabla F(\mathbf{w}_t) \parallel^2]\\
        &+L\eta^2 E[\parallel \tilde{g}_t - \nabla F(\mathbf{w}_t) \parallel^2].
    \end{aligned}
\end{equation}
\end{small}

Let $\triangle_t=\tilde{g}_t - \nabla F(\mathbf{w}_t)$, next we show that $E[\parallel \triangle_t \parallel^2]$ is bounded. From the definition, there is
\begin{small}
\begin{equation}
\label{eq:th_eq5}
    \begin{aligned}
        \triangle_t &=\sum_{k\in S_t}\rho_k^{(t)}(\tilde{g}_k-g_k) + \sum_{k\in S_t}\rho_k^{(t)} (g_k-\nabla F(\mathbf{w}_t)),
    \end{aligned}
\end{equation}
\end{small}where the first term captures the noise due to local SGD, while the second term accounts for the aggregation error. Applying triangle inequality again to \eqref{eq:th_eq5}, we have
\begin{small}
\begin{equation}
\label{eq:th_eq6}
    \begin{aligned}
        E[\parallel \triangle_t \parallel^2] &= E\parallel \sum_{k\in S_t}\rho_k^{(t)}(\tilde{g}_k-g_k) + \sum_{k\in S_t}\rho_k^{(t)} (g_k-\nabla F(\mathbf{w}_t)) \parallel^2\\
        &\leq 2E\parallel\sum_{k\in S_t}\rho_k^{(t)}(\tilde{g}_k-g_k)\parallel^2 \\
        &+ 2E\parallel \sum_{k\in S_t}\rho_k^{(t)} (g_k-\nabla F(\mathbf{w}_t)) \parallel^2,\\
    \end{aligned}
\end{equation}
\end{small}the two terms on the right-hand side of \eqref{eq:th_eq6} can be relaxed by using Young's inequality and Cauchy-Schwarz inequality respectively, given by
\begin{small}
\begin{equation}
\label{eq:th_eq7}
    \begin{aligned}
        E[\parallel \triangle_t \parallel^2] &\leq 2E\parallel\sum_{k\in S_t}\rho_k^{(t)}(\tilde{g}_k-g_k)\parallel^2\\
        &+ 2E \parallel \sum_{k\in S_t}\rho_k^{(t)} (g_k- \nabla F(\mathbf{w}_t)) \parallel^2 \\
        &\leq 2E\Big [\sum_{k\in S_t}(\rho_k^{(t)}) \parallel (\tilde{g}_k-g_k) \parallel^2 \Big]\\
        &+ 2E \Big[ \Big ( \sum_{k\in S_t}(\rho_k^{(t)})^2\Big ) \Big( \sum_{k\in S_t}\parallel g_k- \nabla F(\mathbf{w}_t) \parallel^2 \Big )\Big]
    \end{aligned}
\end{equation}
\end{small}
Based on Lemma \ref{lm:gradient_noise} and Assumption 4, we obtain
\begin{small}
\begin{equation}
\label{eq:th_eq8}
    \begin{aligned}
        E[\parallel \triangle_t \parallel^2] &\leq \frac{2\sigma^2_0}{\gamma^s_{\text{min}}}\frac{K}{\sum_k D_kq_k} \\
        &+ 2E \Big[ \Big ( \sum_{k\in S_t}(\rho_k^{(t)})^2\Big ) \Big( \sum_{k\in S_t}\parallel g_k- \nabla F(\mathbf{w}_t) \parallel^2 \Big )\Big]\\
        &\leq \frac{2\sigma^2_0}{\gamma^s_{\text{min}}}\frac{K}{\sum_k D_kq_k} + 2\zeta^2 E[\sum^{K}_{k=1}(\rho^{(t)}_k)^2]\\
    \end{aligned}
\end{equation}
\end{small}


Combining \eqref{eq:th_eq8} and \eqref{eq:th_eq4}, and applying $E[(\rho_k^{(t)}
)^2]\approx \bar{\rho}_k^2$, it follows that
\begin{small}
\begin{equation}
\label{eq:th_eq9}
    \begin{aligned}
        E[F(\mathbf{w}_{t+1})] &\leq E[F(\mathbf{w}_t)]+(L\eta^2-\eta) E[\parallel \nabla F(\mathbf{w}_t) \parallel^2]\\
        &+ 2L\eta^2 \Big (\frac{\sigma^2_0}{\gamma^s_\text{min}} \frac{ K}{\sum_kD_kq_k}+ \zeta^2\sum^K_{k=1}\bar{\rho}_k^2 \Big).
    \end{aligned}
\end{equation}
\end{small}

From Assumption 1, we know that \cite{boyd2004convex}
\vspace{-1.5mm}
\begin{small}
\begin{equation}
\label{eq:th_eq10}
    \parallel \nabla F(\mathbf{w}_t) \parallel^2 \geq 2\mu(F(\mathbf{w}_t)-F^*).
\end{equation}
\end{small}

By subtracting $F^*$ from both sides of the inequality and based on \eqref{eq:th_eq10} yields
\vspace{-1.5mm}
\begin{small}
\begin{equation}
\label{eq:th_eq11}
    \begin{aligned}
        E[F(\mathbf{w}_{t+1})-F^*] &\leq (1-2\mu \eta + 2L\mu \eta^2)E[F(\mathbf{w}_t)-F^*]\\
        &+ 2L\eta^2 \Big (\frac{\sigma^2_0}{\gamma^s_\text{min}} \frac{ K}{\sum_kD_kq_k}+ \zeta^2\sum^K_{k=1}\bar{\rho}_k^2 \Big).
    \end{aligned}
\end{equation}
\end{small}

Applying \eqref{eq:th_eq11} recursively, we have
\vspace{-1.5mm}
\begin{small}
\begin{equation}
\label{eq:th_eq12}
    \begin{aligned}
        E[F(\mathbf{w}_{t})-F^*] &\leq A^tE[F(\mathbf{w}_0)-F^*]\\
        &+ \frac{2L\eta^2(1-A^t)}{1-A}\Big (\frac{\sigma^2_0}{\gamma^s_\text{min}} \frac{ K}{\sum_kD_kq_k}+ \zeta^2\sum^K_{k=1}\bar{\rho}_k^2 \Big),
    \end{aligned}
\end{equation}
where $A=1-2\mu \eta + 2L\mu \eta^2$ and $F^*$ is the minimal value of $F(\mathbf{w})$.
\end{small}

\bibliographystyle{IEEEtran}
\bibliography{reference}

\end{document}